\newcommand{\soft}[1]{\texttt{#1}}
\newcommand{\ft}{\soft{feyntrop} }
\newcommand{\eq}[1]{\begin{align} #1 \end{align}}
\definecolor{col1}{RGB}{254,97,0}
\definecolor{col2}{RGB}{100,143,255}
\definecolor{col3}{RGB}{120, 94, 240}
\definecolor{col4}{RGB}{220, 38, 127}
\definecolor{col5}{RGB}{255, 176, 0}
\newtheorem{theorem}{Theorem}
\numberwithin{theorem}{section}
\theoremstyle{definition}
\theoremstyle{remark}
\newtheorem{assumption}[theorem]{Assumption}
\newtheorem{observation}[theorem]{Observation}
\newtheorem{conjecture}[theorem]{Conjecture}
\newcommand{\RP}{\mathbb{RP}}
\newcommand{\RR}{\mathbb{R}}
\newcommand{\PP}{\mathbb{P}}
\newcommand{\CC}{\mathbb{C}}
\newcommand{\CP}{\mathbb{CP}}
\newcommand{\ZZ}{\mathbb{Z}}
\newcommand{\cU}{\mathcal{U}}
\newcommand{\cF}{\mathcal{F}}
\newcommand{\cV}{\mathcal{V}}
\newcommand{\cI}{\mathcal{I}}
\newcommand{\tr}{\mathrm{tr}}
\newcommand{\dd}{\mathrm{d}}
\newcommand{\xx}{\boldsymbol{x}}
\newcommand{\XX}{\boldsymbol{X}}
\newcommand{\supp}{\operatorname{supp}}
\newcommand{\newt}{\mathbf{N}}
\title{Tropical Feynman integration in the Minkowski regime}%
\author{Michael Borinsky%
\\ 
\parbox{.45\linewidth}{
\center \small \textsc
Institute for Theoretical Studies\\
ETH Zürich\\
8092 Zürich, Switzerland
}
\and
Henrik J. Munch%
\\
\parbox{.45\linewidth}{
\center \small \textsc
Dipartimento di Fisica e Astronomia\\
Università degli Studi di Padova\\
35131 Padova, Italy
}
\and
Felix Tellander%
\\
\parbox{.45\linewidth}{
\center \small \textsc
Deutsches Elektronen-Synchrotron DESY\\
Notkestr.~85\\
22607 Hamburg, Germany
}
}
\date{}%
\begin{document}

\vspace*{-2\baselineskip}%
\hspace*{\fill} \mbox{\footnotesize{\textsc{DESY-23-026}}}

{\let\newpage\relax\maketitle}

\begin{abstract}
We present a new computer program, \texttt{feyntrop}, which uses the tropical geometric approach to evaluate Feynman integrals numerically.
In order to apply this approach in the physical regime, we introduce a new parametric representation of Feynman integrals that implements the causal $i\varepsilon$ prescription concretely while retaining projective invariance. 
\texttt{feyntrop} can efficiently evaluate  dimensionally regulated, quasi-finite Feynman integrals, with not too exceptional kinematics in the physical regime, with a relatively large number of propagators and with arbitrarily many kinematic scales. We give a systematic classification of all relevant kinematic regimes, review the necessary mathematical details of the tropical Monte Carlo approach, give fast algorithms to evaluate (deformed) Feynman integrands, describe the usage of \texttt{feyntrop} and discuss many explicit examples of evaluated Feynman integrals.
\end{abstract}

\tableofcontents

\section{Introduction}
Feynman integrals are a key tool in quantum field theory. They are necessary to produce accurate predictions from given theoretical input such as a Lagrangian.  Applications are, for instance, the computations of virtual contributions to scattering cross-sections for particle physics phenomenology \cite{Heinrich:2020ybq}, corrections to the magnetic moment of the muon or the half-life of positronium \cite{Karshenboim:2005iy}, critical exponents in statistical field theory \cite{zinn2021quantum} and corrections to the Newton potential due to general relativity \cite{Donoghue:1994dn}.
An entirely mathematical application of Feynman integrals is the certification of cohomology classes in moduli spaces of curves or of graphs \cite{Brown:2021umn}.

In this paper, we introduce \texttt{feyntrop}\footnote{\ft can be downloaded from \url{https://github.com/michibo/feyntrop}.}, a new tool to evaluate Feynman integrals numerically.
In contrast to existing tools, \texttt{feyntrop} can efficiently evaluate Feynman integrals with a relatively large number of propagators and with an arbitrary number of scales. Moreover, \texttt{feyntrop} can deal with Feynman integrals in the \emph{physical} Minkowski regime and automatically takes care of the usually intricate contour deformation procedure. The spacetime dimension is completely arbitrary and integrals that are expanded in a dimensional regulator can be evaluated. The main restriction of \texttt{feyntrop} is that it cannot deal with Feynman integrals having subdivergences, that means the input Feynman integrals are required to be \emph{quasi-finite}. Moreover, \texttt{feyntrop} is not designed to integrate Feynman integrals at certain highly exceptional kinematic points. Outside the Euclidean regime, the external kinematics are required to be sufficiently \emph{generic}. 
It is worthwhile mentioning though that such highly exceptional kinematic points seem quite rare and \texttt{feyntrop} performs surprisingly well in these circumstances---in spite of the lack of mathematical guarantees for functioning. In fact, we were not able to find a quasi-finite integral with exceptional kinematics for which the integration with \texttt{feyntrop} fails. We only observed significantly decreased rates of convergence in such cases.

The mathematical theory of Feynman integrals has advanced rapidly in the last decades. 
Corner stone mathematical developments for Feynman integrals were, for instance, the systematic exploitation of their unitarity constraints (see, e.g., \cite{Bern:1994zx,Bern:2005cq}), the systematic solution of their integration-by-parts identities (see, e.g., \cite{Chetyrkin:1981qh,Laporta:2000dsw}), the application of modern algebraic geometric and number theoretic tools for the benefit of their evaluation (see, e.g., \cite{Bloch:2005bh,Brown:2008um,Panzer:2014caa}) and the systematic understanding of the differential equations which they fulfill (see, e.g., \cite{Remiddi:1997ny,Henn:2013pwa}).

Primarily, these theoretical developments were aimed at facilitating the \emph{analytic} evaluation of Feynman integrals. 
All known analytic evaluation methods are inherently limited to a specific class of sufficiently simple diagrams. Especially for high-accuracy collider physics phenomenology, such analytic methods are often not sufficient to satisfy the demand for Feynman integral computations at higher loop order,
which frequently involve complicated kinematics with many scales.
Even if an analytic expression for a given Feynman integral is available, it is usually a highly non-trivial task to perform 
the necessary analytic continuation into the physical kinematic regime.
On a different tack, computations of corrections to the Newton potential in the post-Newtonian expansion of general relativity \cite{Donoghue:1994dn}
require the evaluation of large amounts of Feynman diagrams in three dimensional Euclidean space. As analytic evaluation is often more difficult in odd-dimensional spacetime, tropical Feynman integration is a promising candidate to fulfill the high demand for large loop order Feynman integrals in this field.

For this reason, numerical methods for the evaluation of Feynman integrals seem unavoidable once a certain threshold in precision has to be overcome. 
In this paper, we will use \emph{tropical sampling} that was introduced in \cite{Borinsky:2020rqs} to evaluate Feynman integrals numerically.
This numerical integration technique is faster than traditional methods because the known (tropical) geometric structures of Feynman integrals are employed for the benefit of their numerical evaluation.
For instance, general Euclidean Feynman integrals with up to 17 loops and 34 propagators can be evaluated using basic hardware with the proof-of-concept implementation that was distributed by the first author with \cite{Borinsky:2020rqs}. The code of \ft is based on this implementation.
 The relevant mathematical structure is the \emph{tropical geometry} of Feynman integrals in the parametric representation \cite{Panzer:2019yxl,Borinsky:2020rqs}. This tropical geometry itself is a simplification of the intricate \emph{algebraic geometry} Feynman integrals display (see, e.g., \cite{Brown:2015fyf}).
Tropical Feynman integration was already used, for instance, in \cite{Dunne:2021lie} to estimate the $\phi^4$ theory $\beta$ function up to loop order 11.
Some ideas from \cite{Borinsky:2020rqs} were already implemented in the \soft{FIESTA} package \cite{Smirnov:2021rhf}.
Tropical sampling was extended to toric varieties with applications to Bayesian statistics \cite{Borinsky:2022}. Moreover, the tropical approach was recently applied to study infrared divergences of Feynman integrals in the Minkowski regime \cite{Arkani-Hamed:2022cqe}.

The tropical approach to Feynman integrals falls in line with the increasing number of fruitful applications of tools from convex geometry in the context of quantum field theory. These include, for example, the discovery of polytopes in amplitudes (see, e.g. \cite{Arkani-Hamed:2013jha,Arkani-Hamed:2017tmz}). Further, Feynman integrals can be seen as generalized Mellin-transformations \cite{Nilsson:2013,Berkesch:2014,Schultka:2018nrs}. As such they are solutions to GKZ-type differential equation systems \cite{Gelfand1990}. Tropical and convex geometric tools are central to this analytic approach towards Feynman integrals (see, e.g., \cite{delaCruz:2019skx,Klausen:2019hrg,Klemm:2019dbm,Chestnov:2022alh,Tellander2022}). %

Tropical Feynman integration is closely related to the \emph{sector decomposition} approach \cite{Binoth:2000ps,Bogner:2007cr,Kaneko:2009qx}, which applies to completely general algebraic integrals. State of the art implementations of sector decompositions are, for instance, \soft{pySecDec} \cite{Borowka:2017idc} and \soft{FIESTA} \cite{Smirnov:2021rhf}. Other numerical methods that are tailored specifically to Feynman integrals are, for instance, difference equations \cite{Laporta:2000dsw}, unitarity methods \cite{Soper:1999xk}, the Mellin-Barnes representation \cite{Anastasiou:2005cb} and loop-tree duality \cite{Catani:2008xa,Capatti:2019ypt}. With respect to potential applications to collider phenomenology, the latter three have the advantage of being inherently adapted to Minkowski spacetime kinematics. A newer technique is the systematic semi-numerical evaluation of Feynman integrals using differential equations \cite{Liu:2017jxz,Mandal:2018cdj}, which is implemented, for instance, in \soft{AMFlow} \cite{Liu:2022chg}, \soft{DiffExp} \cite{Hidding:2020ytt} and \soft{SeaSyde} \cite{Armadillo:2022ugh}. A similar semi-numerical approach was put forward in \cite{Dubovyk:2022frj}. This technique can evaluate Feynman integrals quickly in the physical regime with high accuracy. 
A caveat is that it  relies on 
the algebraic solution of the usually intricate integration-by-parts system associated to the respective Feynman integral
and (usually) on
analytic boundary values for the differential equations 
(see \cite{Liu:2022chg,Liu:2022mfb} for an exception where the boundary values are computed exclusively from algebraic input). 
 We expect \soft{feyntrop}, which does not rely on any analytic or algebraic input, to be useful for computing boundary values as input for such methods.

\ft uses the \emph{parametric representation} of Feynman integrals for the numerical evaluation, which we briefly review in Section~\ref{sec:def}. This numerical evaluation has quite different characters in separate \emph{kinematic regimes}. We propose a new classification of such kinematic regimes in Section~\ref{sec:kinematic_regimes} which, in addition to the usual Euclidean and Minkowski regimes, includes the intermediate \emph{pseudo-Euclidean} regime. The original tropical Feynman integration implementation from \cite{Borinsky:2020rqs} was limited to the Euclidean regime.
Here, we achieve the extension of this approach to non-Euclidean regimes.

In the Minkowski regime, parametric Feynman integrands can have a complicated pole structure inside the integration domain. For the numerical integration an explicit \emph{deformation} of the integration contour, which respects the desired causality properties, is needed.
The use of explicit contour deformation prescriptions for numerics was 
pioneered in \cite{Soper:1999xk} and was later applied in the sector decomposition framework \cite{Binoth:2005ff}.
(Recently, a momentum space based approach for the solution of the deformation problem was put forward \cite{Pittau:2021jbs}.)
In Section~\ref{sec:contour}, we propose an explicit deformation prescription which, in its basic form, was employed in \cite{Mizera:2021icv} in the context of cohomological properties of Feynman integrals. This deformation prescription has the inherent advantage of retaining the projective symmetry of the parametric Feynman integrand. We provide explicit formulas for the Jacobian and thereby propose a new \emph{deformed parametric representation} of the Feynman integral. 

It is often desirable to evaluate a Feynman integral using dimensional regularization by adding a formal expansion parameter to the spacetime dimension, e.g.~$D=D_0-2\epsilon$, where $D_0$ is a fixed number and we wish to evaluate the Laurent or Taylor expansion of the integral in $\epsilon$. 
We will explain how \ft deals with such dimensionally regularized Feynman integrals in Section~\ref{sec:dimreg}. 
Moreover, we will discuss one of the major limitations of \ft in this section: 
In its present form \ft can only integrate Feynman integrals that are \emph{quasi-finite}. That means, input Feynman integrals are allowed to have an overall divergence, but no subdivergences. Further analytic continuation prescriptions (along the lines of \cite{Nilsson:2013,Berkesch:2014,vonManteuffel:2014qoa}) would be needed to deal with such subdivergences and we postpone the implementation of such prescriptions into \ft to a future publication. For now, the user of the program is responsible to render all input integrals quasi-finite; for instance by projecting them to a quasi-finite basis \cite{vonManteuffel:2014qoa}.
Note, however, that within our approach, the base dimension $D_0$ is completely arbitrary and can even be a non-integer value if desired.
The applicability in the case $D_0 =3$ makes \ft a promising tool for the computation of post-Newtonian corrections to the gravitational potential \cite{Kol:2007bc}.

In Sections~\ref{sec:tropicalapprox} and \ref{sec:tropsampling}, we will review the necessary ingredients for the tropical Monte Carlo approach from \cite{Borinsky:2020rqs}: The concepts of the \emph{tropical approximation} and \emph{tropical sampling}. In Section~\ref{sec:genperm}, we review the (tropical) geometry of parametric Feynman integrands and the particular shape that the Symanzik polynomials' Newton polytopes exhibit. We will put special focus on the \emph{generalized permutahedron} property of the second Symanzik $\cF$ polynomial. At particularly exceptional kinematic points, this property of the $\cF$ polynomial can be lost. In these cases the integration with \ft might fail. We discuss this limitation in detail in Section~\ref{sec:genperm}.
The overall tropical sampling algorithm is summarized in Section~\ref{sec:gentropsampling}.

In Section~\ref{sec:evaluation}, we summarize the necessary steps for the efficient evaluation of (deformed) parametric Feynman integrands.  The key step is to express the entire integrand in terms of explicit matrix expressions.  Our method is more efficient than the naive expansion of the Symanzik polynomials, as fast linear algebra routines can be used for the evaluation of such matrix expressions.

The structure, installation and usage of the program \ft is described in Section~\ref{sec:program}. To illustrate its capabilities we give multiple detailed examples of evaluated Feynman integrals in Section~\ref{sec:examples}.
In Section~\ref{sec:conclusions}, we conclude and give pointers for further developments of the general tropical Feynman integration method and the program \soft{feyntrop}.

\section{Feynman integrals}
\subsection{Momentum and parametric representations}
\label{sec:def}
Let $G$ be a one-particle irreducible Feynman graph with edge set $E$ and vertex set $V$.
Each edge $e \in E$ comes with a mass $m_e$ and an edge weight $\nu_e$. Each vertex $v \in V$ comes with 
an incoming spacetime momentum $p_v$. Vertices without incoming momentum, i.e.~where $p_v=0$, are internal.
Let $\mathcal E$ by the incidence matrix of $G$ 
which is formed by choosing an arbitrary orientation for the edges 
and setting $\mathcal E_{v,e} = \pm 1$ if $e$ points to/from $v$ and $\mathcal E_{v,e}= 0$ if 
$e$ is not incident to $v$.
The Feynman integral associated to $G$ reads%
\begin{equation}\label{eq:prop}
    \cI=
\int\prod_{e\in E}\frac{\dd^Dq_e}{i\pi^{D/2}}
\left(\frac{-1}{q_e^2-m_e^2+i\varepsilon}\right)^{\nu_e}
\prod_{v\in V\setminus \{v_0\}} i \pi^{D/2} \delta^{(D)}\left(p_v+\sum_{e\in E}\mathcal{E}_{v,e}q_e\right),
\end{equation}
where we integrate over all $D$-dimensional spacetime momenta $q_e$ and we extracted the $\delta$ function that accounts for overall momentum conservation by removing the vertex $v_0 \in V$.
We compute the squared length $q_e^2 = (q_e^0)^2- (q_e^1)^2 -(q_e^2)^2 -\ldots$ using the mostly-minus signature Minkowski metric.

To evaluate $\mathcal I$ numerically, we will use the equivalent parametric representation (see, e.g., \cite{Nakanishi:1971})
\begin{equation}\label{eq:par}
    \cI=\Gamma(\omega)\int_{\PP_+^{E}} \phi \quad
\text{ with }\quad
\phi = 
\left(\prod_{e\in E}\frac{x_e^{\nu_e}}{\Gamma(\nu_e)}\right)
\frac{1}{\cU(\xx)^{D/2}}\left(\frac{1}{\cV(\xx)-i\varepsilon \,\sum_{e\in E} x_e}\right)^{\omega}
\Omega\,.
\end{equation}
We integrate over the \emph{positive projective simplex}
$\PP_+^{E} = \{ \xx = [x_0, \ldots, x_{|E|-1}] \in \RP^{E-1} : x_e > 0 \}$
with respect to its canonical volume form
\begin{align} \Omega=\sum_{e=0}^{|E|-1}(-1)^{|E|-e-1}\frac{\dd x_0}{x_0}\wedge\cdots\wedge\widehat{\frac{\dd x_e}{x_e}}\wedge\cdots\wedge\frac{\dd x_{|E|-1}}{x_{|E|-1}} \, . \end{align}
Note that in the scope of this article we make the unusual choice to start the indexing with $0$ for the benefit of a seamless notational transition to our computer implementation.
So, the edge and vertex sets are always assumed to be given by $E = \{0,1,\ldots,|E|-1\}$ and $V = \{0,1,\ldots,|V|-1\}$.

The \emph{superficial degree of divergence} of the graph $G$ is 
given by $\omega = \sum_{e \in E} \nu_e - D L/2$, where $L = |E|-|V|+1$ is the number of loops of $G$.

We use $\cV(\xx) = \cF(\xx)/\cU(\xx)$ as a shorthand for the quotient of the 
two \emph{Symanzik polynomials} that can be defined using the \emph{reduced graph Laplacian} $\mathcal L(\xx)$,
a
$(|V|-1)\times(|V|-1)$ matrix given element-wise by 
    $\mathcal{L}(\xx)_{u,v}=\sum_{e\in E} \mathcal E_{u,e}\mathcal E_{v,e}/x_e $
for all $u,v \in V \setminus \{v_0\}.$
We have the identities
\begin{align} \label{eq:laplaceUF} \cU(\xx)&=\det \mathcal{L}(\xx) \,\left( \prod_{e\in E}x_e \right)\, ,& \cF(\xx)&=\cU(\xx)\,\left(-\sum_{u,v \in V\setminus\{v_0\}} \mathcal P^{u,v} ~ \mathcal{L}^{-1}(\xx)_{u,v}+\sum_{e\in E}m_e^2x_e\right), \end{align}
where $\mathcal P^{u,v} = p_u \cdot p_v$ with the scalar product being computed using the Minkowski metric.

\paragraph{Combinatorial Symanzik polynomials}
We also have the combinatorial formulas for $\cU$ and $\cF$
\begin{align} \label{eq:polyUF} \cU(\xx)&=\sum_{T} \prod_{e\notin T} x_e \, ,& \cF(\xx)&=-\sum_{F} p(F)^2 \prod_{e \notin F}x_e+\cU(\xx)\sum_{e\in E}m_e^2x_e \, , \end{align}
where we sum over all spanning trees $T$ and all spanning two-forests $F$ of $G$, and $p(F)^2$ is the Minkowski squared momentum running between the two-forest components. From this formulation it can be seen that $\cU$ and $\cF$ are homogeneous polynomials of degree $L$ and $L+1$ respectively. Hence, $\cV$ is a homogeneous rational function of degree $1$.

We will give fast algorithms to evaluate $\cU(\xx)$ and $\cF(\xx)$ in Section~\ref{sec:evaluation}.

\subsection{Kinematic regimes}
\label{sec:kinematic_regimes}

\begin{figure}

\centering

\newcommand{\xs}{1}

\begin{tikzpicture} \coordinate (v) at (0,0); \coordinate (w) at (0,\xs); \coordinate (u) at (0,{2*\xs}); \draw[fill=col4,opacity=.3] (v) ellipse ({2.3*2.5*\xs} and {2.5*\xs}); \draw[fill=col3,opacity=.3] (u) ellipse ({2.3*\xs} and {\xs}); \draw[fill=col2,opacity=.3] (w) ellipse ({2.3*2*\xs} and {2*\xs}); \draw[fill=col1,opacity=.3] (v) ellipse ({2.3*3*\xs} and {3*\xs}); \node (U) at (u) {Euclidean}; \node (W) at (v) {pseudo-Euclidean}; \node (G) at (0,-1.8*\xs) {generic}; \node (N) at (0,-2.75*\xs) {exceptional}; \end{tikzpicture}

\caption{Partition of kinematics into different regimes.}
\label{fig:kinematics}

\end{figure}
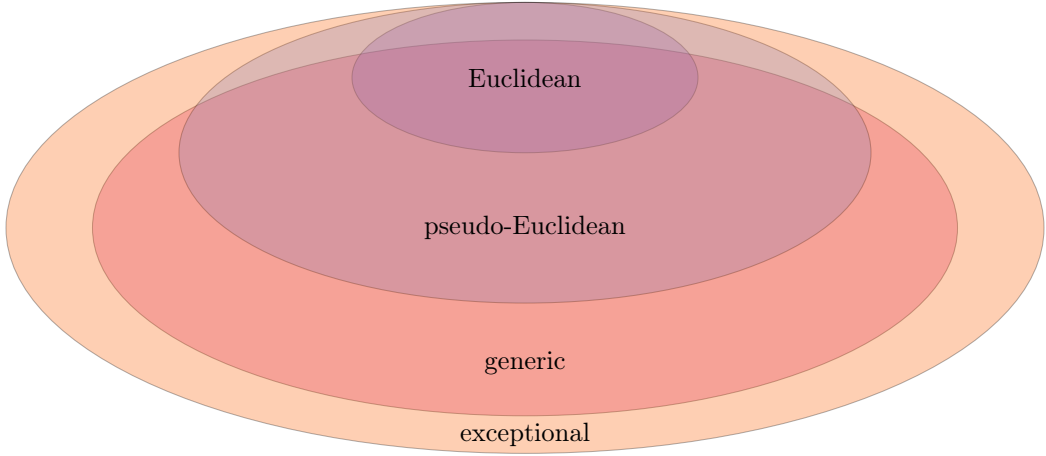

By Poincar\'e invariance,
the value of the Feynman integral \eqref{eq:prop}
only depends on the $|V|\times |V|$ Gram matrix $\mathcal P^{u,v} = p_u \cdot p_v$ and not on the explicit form of the vectors $p_v$. 
In fact, it is even irrelevant in which ambient dimension the vectors $p_v$ are defined.
The following characterisation of the different \emph{kinematic regimes} that we propose
will therefore only take the input of a symmetric $|V|\times |V|$ matrix $\mathcal P$
with vanishing row and column sums (i.e.~the momentum conservation conditions $\sum_{v \in V} p_u \cdot p_v = \sum_{v \in V} \mathcal P^{u,v} = 0$ for all $u \in V$), without requiring any explicit knowledge of the $p_v$ vectors.
In fact, we will not even require that there are any vectors $p_v$ for which $\mathcal P^{u,v} = p_u \cdot p_v$.

\paragraph{Euclidean regime}
We say a given Feynman integral computation problem is in the \emph{Euclidean regime} if the matrix $\mathcal P$ is negative semi-definite. 
In this regime, $\cF(\xx) \geq 0$ for all $\xx \in \PP^E_+$.
We call this the Euclidean regime, because the integral \eqref{eq:prop} is equivalent to an analogous Feynman integral where scalar products are computed with the Euclidean all-minus metric. 
To see this, note that as $-\mathcal P$ is positive semi-definite, there is a $|V|\times |V|$ matrix $\mathcal Q$ such that $\mathcal P = - \mathcal Q^T \mathcal Q$.
We can think of the column vectors $\widetilde{p}_1,\ldots,\widetilde{p}_{|V|}$ of $\mathcal Q$ as an auxiliary set of incoming momentum vectors. Elements of $\mathcal P$ can be interpreted as Euclidean, all-minus metric, scalar products of the $\widetilde{p}_v$-vectors:
$\mathcal P^{u,v} = - \widetilde{p}_u^{\, T} \widetilde{p}_v = - \sum_{w \in V} \mathcal Q^{w,u} \mathcal Q^{w,v}$. Translating this back 
to \eqref{eq:prop} means that we can change the signature of the scalar products to the all-minus metric if we replace the external momenta with the $\widetilde{p}_v$ vectors which are defined in an auxiliary space $\RR^{|V|}$ . We emphasize that this way of relating Euclidean and Minkowski space integrals is inherently different from the typical \emph{Wick rotation} procedure and that the $\widetilde p_v$-vectors will in general be different from the original $p_v$ vectors.

\paragraph{Pseudo-Euclidean regime}
In fact, $\cF(\xx) \geq 0$ for all $\xx \in \PP^E_+$ in a larger kinematic regime,
where $\mathcal P$ is not necessarily negative semi-definite.
If for each subset $V' \subset V$ of the vertices the inequality
\begin{align} \label{eq:pQE} \left(\sum_{v \in V'} p_v \right)^2 = \sum_{u,v \in V'} p_u \cdot p_v = \sum_{u,v \in V'} \mathcal P^{u,v} \leq 0 \end{align}
is respected, then we are in  the \emph{pseudo-Euclidean} regime.
The first two equalities in \eqref{eq:pQE} are only included as mnemonic devices;
knowledge of $\mathcal P$ is sufficient to check the inequalities.
Equivalently, we can require the element sums of all \emph{principle minor matrices} of the $\mathcal P$ matrix to be $\leq 0$.

By \eqref{eq:polyUF} and \eqref{eq:pQE}, the coefficients 
of $\cF$ are non-negative in the pseudo-Euclidean regime.
Our choice of normalization factors ensures that \eqref{eq:prop} and \eqref{eq:par} are real positive in this case.

We remark that there is a commonly used alternative definition of a kinematic regime which, on first sight, is similar to the condition above. 
This alternative definition requires the inequalities $p_u \cdot p_v \leq 0$ to be fulfilled for all $u,v \in V$ (see, e.g., \cite[Sec.~2.5]{Weinzierl:2022eaz}). This is more restrictive than our condition in \eqref{eq:pQE}. In fact, it is too restrictive for our purposes, as not even entirely \emph{Euclidean Feynman integrals} can generally be described in this regime. The reason for this is that not all negative semi-definite matrices $\mathcal P$ fulfill this more restrictive condition.

In our case, the Euclidean regime is contained in the 
pseudo-Euclidean regime.
To verify this, we have to make sure that a negative semi-definite $\mathcal P$ fulfills the conditions in \eqref{eq:pQE}. Such a $\mathcal P$  can  be represented with an appropriate set of $\widetilde{p}_v$ vectors as above: $\mathcal P^{u,v} = - \widetilde{p}_u^{\, T} \widetilde{p}_v$.
For each $V' \subset V$ we get the principle minor element sum
\begin{align} \sum_{u,v \in V'}\mathcal P^{u,v} = - \sum_{u,v \in V'} \widetilde{p}_u^{\, T} \widetilde{p}_v = -\left(\sum_{v \in V'} \widetilde{p}_v \right)^T \left(\sum_{v \in V'}\widetilde{p}_v \right) \leq 0 \, . \end{align}

\paragraph{Minkowski regime}
If we are not in the pseudo-Euclidean regime (and thereby also not in the Euclidean regime), %
then we are in the \emph{Minkowski regime}.

\paragraph{Generic and exceptional kinematics}
Without any resort to the explicit incoming momentum vectors $p_v$,
we call a vertex $v$ \emph{internal} if $\mathcal P^{u,v} = 0$ for
all $u \in V$ and \emph{external} otherwise.
Let $V^\textrm{ext} \subset V$ be the set of external vertices.
Complementary to the classification above,
we say that our kinematics are \emph{generic} if
for each \emph{proper} subset $V' \subsetneq V^\textrm{ext}$ of the external vertices of $G$ and
for each non-empty subset $E' \subset E$ of the edges of $G$ we have
\begin{align}\label{eq: generic kinematics}       \left(\sum_{v \in V'} p_v \right)^2 &= \sum_{u,v \in V'} p_u \cdot p_v = \sum_{u,v \in V'} \mathcal P^{u,v} \neq \sum_{e \in E'} m_e^2 \, . \end{align}
For example, the kinematics are always generic 
in the pseudo-Euclidean regime if $m_e > 0$ for all $e \in E$ or if 
$ \sum_{u,v \in V'} \mathcal P^{u,v} < 0$ for all $V' \subsetneq V^\textrm{ext}$.
Note that generic kinematics also exclude 
\emph{on-shell external momenta}, i.e.~cases where $p_v^2 = \mathcal P^{v,v} =0$ for some $v\in V^\textrm{ext}$ 
as long as not all $m_e > 0$, for then there exists at least one edge $e\in E$ such that $p_v^2=0=m_e^2$, thus violating \eqref{eq: generic kinematics}.
Genericity, for instance, guarantees that there will be no cancellation between the momentum and the mass part of the $\cF$-polynomial as defined in \eqref{eq:polyUF}. 

Kinematic configurations that are not generic are called \emph{exceptional}.

As above, only the statements on $\mathcal P^{u,v}$ are sufficient for the classification. The other equalities are added 
to enable a seamless comparison to the literature.

The discussed kinematic regimes and their respective overlaps are illustrated in Figure~\ref{fig:kinematics}. In contrast to what the figure might suggest, the exceptional kinematics only cover a space that is of lower dimension than the one of the generic regime. The Minkowski regime is not explicitly shown as it covers the whole area that is not pseudo-Euclidean. Note that Minkowski, pseudo-Euclidean and Euclidean kinematics can be exceptional.

\ft detects the relevant kinematic regime using the conditions discussed above.

\subsection{Contour deformation}
\label{sec:contour}
In the pseudo-Euclidean (and thereby also in the Euclidean) regime, 
$\mathcal F(\xx)$ stays positive and the integral \eqref{eq:par} cannot have any simple poles inside the integration domain.

In the Minkowski regime however, simple propagator poles of the integrand \eqref{eq:prop} and simple poles associated to zeros of $\mathcal F$ in \eqref{eq:par} are avoided using the causal $i\varepsilon$ prescription (see, e.g., \cite{Eden:1966dnq}). This prescription tells us to which side of the pole the integration contour needs to be deformed. When evaluating integrals such as \eqref{eq:prop} numerically, we have to find an \emph{explicit} choice for such an integration contour. Finding such an explicit contour deformation, which also has decent numerical stability properties, is a surprisingly complicated task. 
Explicit contour deformations for numerical evaluation were pioneered by Soper~\cite{Soper:1999xk} and later refined \cite{Binoth:2005ff,Nagy:2006xy}. This original type of contour deformation has the caveat that the projective symmetry of the integral \eqref{eq:par} is lost as these deformations are inherently non-projective and usually formulated in affine charts, i.e.~`gauge fixed' formulations of \eqref{eq:par}. Experience, e.g.~from \cite{Borinsky:2020rqs}, shows that the projective symmetry of \eqref{eq:par} is a treasured good that should not be given up lightly.

To retain projective symmetry we will hence use a different deformation than established numerical integration tools. 
We will use the embedding $\iota_\lambda: \PP^{E}_+ \hookrightarrow \CP^{|E|-1}$ 
(recall that $\PP^{E}_+$ is a subset of 
$\RP^{|E|-1}$)
of 
the  projective simplex into $|E|-1$  complex dimensional projective space given by
\begin{align} \label{eq:iota} \iota_\lambda : x_e \mapsto x_e \exp \left (-i \lambda\frac{\partial \mathcal V}{\partial x_e}(\xx) \right). \end{align}
This deformation prescription was proposed in \cite[eq.~(43)]{Mizera:2021icv} in the context of the cohomological viewpoint on Feynman integrals (see also \cite[Sec.~4.3]{Hannesdottir:2022bmo}%
). As $\cU$ and $\cF$ are homogeneous polynomials of degree $L$ and $L+1$ respectively and $\cV(\xx) = \cF(\xx)/\cU(\xx)$, the partial derivative $\frac{\partial \mathcal V}{\partial x_e}$ is a rational function in $\xx$ of homogeneous degree $0$, so $\iota_\lambda$ indeed respects projective equivalence.

We want to deform the integration contour $\PP^E_+$ of \eqref{eq:par} into $\iota_\lambda \! \left( \PP^E_+ \right) \subset \CP^{|E|-1}$. 
The deformation $\iota_\lambda$ does not change the boundary of $\PP^E_+$ as each boundary face of $\PP^E_+$ is characterized by at least one vanishing homogeneous coordinate $x_e=0$. So, $\iota_\lambda \! \left( \partial \PP^E_+ \right) = \partial \PP^E_+ $.
By Cauchy's theorem, we can deform the contour as long as 
we do not hit any poles of the integrand $\phi$.
Supposing that $\lambda$ is small enough such that no
poles of $\phi$ are hit by the deformation, we have
\begin{align} \label{eq:deform_formal} \mathcal I = \Gamma(\omega) \int_{\iota_\lambda \left( \PP^{E}_+ \right)} \phi = \Gamma(\omega) \int_{\PP^{E}_+} \iota_\lambda^* \phi \, , \end{align}
where $\iota_\lambda^* \phi$ denotes the \emph{pullback} of the differential form $\phi$. A computation on forms reveals that 
$ \iota_\lambda^* \, \Omega = \det (\mathcal J_\lambda(\xx)) \, \Omega, $
where the Jacobian $\mathcal J_\lambda(\xx)$ is the $|E|\times |E|$ matrix given element-wise by
\begin{align} \label{eq:Jlambda} \mathcal J_\lambda(\xx)^{e,h} =\delta_{e,h} - i \lambda x_e \frac{\partial^2 \mathcal V}{\partial x_e \partial x_h} (\xx) \text{ for all } e,h \in E \, . \end{align}
Thus, we arrive at the desired \emph{deformed parametric Feynman integral} by making \eqref{eq:deform_formal} explicit,
\begin{align} \label{eq:deform} \cI=\Gamma(\omega)\int_{\PP_+^{E}} \iota^*_\lambda \, \phi= \Gamma(\omega)\int_{\PP_+^{E}} \left(\prod_{e\in E}\frac{X_e^{\nu_e} }{\Gamma(\nu_e)}\right) \frac{ \det \mathcal J_\lambda(\xx)   }{\cU\left(\XX\right)^{D/2} \cdot \cV\left(\XX\right)^{\omega}} \, \Omega \, , \end{align}
where $\XX = \iota_\lambda(\xx)$, that means $\XX = (X_0,\ldots,X_{|E|-1})$ and $X_e = x_e \exp\big(-i \lambda \frac{\partial \mathcal V}{\partial x_e} (\xx) \big)$ for all $e\in E$.

Although the prescription \eqref{eq:iota} was proposed before in a more formal context, the deformed formulation of the parametric Feynman integral \eqref{eq:deform} with the explicit Jacobian factor given by \eqref{eq:Jlambda} appears not to have been considered previously in the literature.

In Section~\ref{sec:evaluation}, we provide fast algorithms and formulas to evaluate $\frac{\partial \mathcal V}{\partial x_e} (\xx)$ and $X_e$.

\paragraph{Landau singularities}
In the formulation \eqref{eq:deform}, the $i\varepsilon$ prescription is taken care of by the 
deformation of the rational function $\mathcal V$. To see this, consider the Taylor expansion of $\cV(\XX)$ in $\lambda$,
\begin{align} \cV\left(\XX \right) = \cV(\xx) -i \lambda \sum_{e\in E} x_e \left(\frac{\partial \mathcal V}{\partial x_e} (\xx) \right)^2 + \mathcal O (\lambda^2) \, . \end{align}
The $i\varepsilon$ prescription in \eqref{eq:par} is ensured
if
the imaginary part of $\mathcal V(\XX)$ is strictly negative
for \emph{sufficiently small} $\lambda$.
This is the case for all $\xx \in \PP^E_+$ as long as
there are no solutions of the \emph{Landau equations} %
\begin{align} \label{eq:landau} 0 = x_e \frac{\partial \mathcal V}{\partial x_e} (\xx) \quad \text{ for each } \quad e\in E \, , \quad \text{ for any } \quad \xx \in \PP^E_+ \, , \end{align}
whose solutions are the \emph{Landau singularities}. We will assume that our Feynman integral is always free of Landau singularities. 

Even though we require that $\lambda$ is \emph{small enough}, we can give it, in contrast to the $\varepsilon$ in \eqref{eq:par}, an explicit \emph{finite} value.
Hence, eq.~\eqref{eq:deform} is finally an explicit form of the original Feynman integral \eqref{eq:prop} that is going to serve as input for the tropical numerical integration algorithm.

\subsection{Dimensional regularization and \texorpdfstring{$\epsilon$}{} expansions}
\label{sec:dimreg}

So far, we did not make any restrictions on the finiteness properties of the integrals \eqref{eq:prop}, \eqref{eq:par} and \eqref{eq:deform}.
We say a Feynman integral is \emph{quasi-finite} if the integral in the parametric representation 
\eqref{eq:par} (or equivalently \eqref{eq:deform}) is finite. Only the integral needs to be finite. The 
$\Gamma$ function prefactor is allowed to give divergent contributions.
Note that this is more permissive than requiring that \eqref{eq:prop} is finite, which is already divergent, e.g., for the 
$1$-loop bubble in $D=4$ with unit edge weights. 

In this paper, we will restrict our attention to such quasi-finite Feynman integrals. If an integral is not quasi-finite, it can be expanded as a linear combination of quasi-finite integrals \cite{Nilsson:2013,Berkesch:2014,vonManteuffel:2014qoa}. %

Quasi-finiteness allows \emph{overall} divergences due to the $\Gamma(\omega)$ factor
that becomes singular if $\omega$ is an integer $\leq 0$. Such divergences are easily taken care of by 
using dimensional regularization. As usual we will perturb the dimension by $\epsilon$ in the sense that
\begin{align} D = D_0 - 2\epsilon \, , \end{align}
where $D_0$ is a fixed number and $\epsilon$ is an expansion parameter%
\footnote{
Note that the causal $i\varepsilon$ and the regularization/expansion parameter $\epsilon$ are (unfortunately) usually referred to with the same Greek letter.  We will follow this tradition, but use different versions of the letter for the respective meanings consistently.
}.
Analogously, we define $\omega_0 = \sum_{e\in E} \nu_e - D_0 L/2$.
Using this notation, we may make the $\epsilon$ dependence in \eqref{eq:deform} explicit and expand,
\begin{align} \label{eq:deform_eps} \cI= \Gamma(\omega_0 + \epsilon L) \sum_{k =0}^\infty \frac{ \epsilon^k }{k!} \int_{\PP_+^{E}} \left(\prod_{e\in E}\frac{X_e^{\nu_e} }{\Gamma(\nu_e)}\right) \frac{ \det \mathcal J_\lambda(\xx)   }{\cU\left(\XX\right)^{D_0/2} \cdot \cV\left(\XX\right)^{\omega_0}} \log^k \left( \frac{\cU(\XX)}{\cV(\XX)^L} \right) \, \Omega \, . \end{align}
If the $k=0$ integral is finite, all higher orders in $\epsilon$ are also finite 
as the $\log^k$ factors cannot spoil the integrability. The $\Gamma$ factor can be expanded in $\epsilon$ using
$\Gamma(z+1) = z \Gamma(z)$ and the expansion
\begin{align} \label{eq:gamma_eps} \log \Gamma(1 - \epsilon) = \gamma_E \epsilon + \sum_{n =2}^\infty \frac{\zeta(n)}{n} \epsilon^n \, , \end{align}
with Euler's $\gamma_E$ and Riemann's $\zeta$ function.

Together, eqs.~\eqref{eq:deform_eps} and \eqref{eq:gamma_eps}
give us an explicit formulation of the $\epsilon$ expansion of the Feynman integral \eqref{eq:prop}
in the quasi-finite case. 
In the remainder of this article we will explain how to evaluate the 
expansion coefficients in \eqref{eq:deform_eps} using the tropical 
sampling approach.

\section{Tropical geometry}
\label{sec:tropical}

\subsection{Tropical approximation}
\label{sec:tropicalapprox}

We will use the tropical sampling approach which was put forward in \cite{Borinsky:2020rqs} to evaluate the deformed parametric Feynman integrals in \eqref{eq:deform} and \eqref{eq:deform_eps}.
Here we briefly review the basic concepts.%

For any homogeneous polynomial in $|E|$ variables $p(\xx) = \sum_{k\in \supp(p)} a_k \prod_{e=0}^{|E|-1} x_e^{k_e}$, the support $\supp(p)$ is the set of multi-indices for which $p$ has a non-zero coefficient $a_k$.
For any such polynomial $p$, we define the \emph{tropical approximation} $p^\tr$ as 
\begin{align} p^\tr(\xx)=\max_{k\in\supp(p)} \prod_{e=0}^{|E|-1} x_e^{k_e}\,. \end{align}
If, for example, $p(\xx) = x_0^2 x_1 - 2 x_0 x_1 x_2 + 5 i x_2^3$, then
$p^\tr(\xx) = \max\{ x_0^2 x_1, x_0x_1x_2, x_2^3 \}$.
Note that the tropical approximation forgets about the explicit value of the 
coefficients; it only depends on the fact that a specific coefficient is zero or non-zero.
This way, the tropical approximation only depends on the set $\supp(p)\subset \ZZ_{\geq 0}^{|E|}$. %
In fact, it 
only depends on the shape of the \emph{convex hull} of $\supp(p)$, which is the \emph{Newton polytope} of $p$.
For this reason, $p^\tr$ is nothing but a function avatar of this polytope. 
Indeed, we can write $p^\tr(\xx)$ as follows,
\begin{align} p^\tr(\xx) = \exp \left( \max_{ \mathbf v \in \newt[p] } \mathbf v^T \mathbf y \right), \end{align}
where $\mathbf y = (y_0,\ldots,y_{|E|-1})$ with $y_e = \log x_e$, $\mathbf v^T \mathbf y = \sum_{e\in E} v_e y_e$ and we maximize over the Newton polytope $\newt[p]$ of $p$.
The exponent above is the \emph{tropicalization} $\operatorname{Trop}[p]$ of 
$p$ over $\CC$ with trivial valuation. It plays a central role in tropical geometry (see, e.g., \cite{maclagan2015introduction}).
For us, the key property of the tropical approximation is that it may be used to put upper and lower bounds on a polynomial:
\begin{theorem}[{\cite[Theorem~8]{Borinsky:2020rqs}}]
For a homogeneous $p\in\CC[x_0,\ldots,x_{|E|-1}]$ that is completely non-vanishing on $\PP_+^{E}$ there exist constants $C_1,C_2>0$ such that
\begin{equation}\label{eq: tropical approximation}
    C_1\le\frac{|p(\xx)|}{p^\tr(\xx)}\le C_2 \quad \text{ for all } \quad \xx\in\PP_+^{E} \, .
\end{equation}
\end{theorem}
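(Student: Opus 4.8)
The plan is to treat the upper and lower bounds separately. Both $|p(\xx)|$ and $p^\tr(\xx)$ are continuous, strictly positive on $\PP^E_+$, and homogeneous of the same degree $d=\deg p$ (here $p^\tr$ is homogeneous of degree $d$ because every element of $\supp(p)$ has coordinate sum $d$), so their ratio descends to a continuous function on $\PP^E_+\cong\RR^{|E|-1}$; writing $x_e=e^{y_e}$ it becomes $\tilde f(\mathbf y)=|p(e^{\mathbf y})|/p^\tr(e^{\mathbf y})$ with $\mathbf y\in\RR^E/\RR\mathbf 1$. The upper bound is immediate from the triangle inequality: $|p(\xx)|\le\sum_{k\in\supp(p)}|a_k|\,\xx^k\le\big(\sum_{k\in\supp(p)}|a_k|\big)p^\tr(\xx)$, so one may take $C_2=\sum_{k\in\supp(p)}|a_k|$. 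The real content is the lower bound, i.e.~that $\inf\tilde f>0$. Recall that ``$p$ completely non-vanishing on $\PP^E_+$'' means that the truncation $p_\delta=\sum_{k\in\delta\cap\supp(p)}a_k\xx^k$ to every face $\delta$ of $\newt[p]$ (including $\delta=\newt[p]$, which returns $p$) is nowhere zero on $\PP^E_+$.

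For the lower bound I would induct on $\dim\newt[p]$. If $\dim\newt[p]=0$ then $p$ is a monomial and $\tilde f$ is constant. For the inductive step, take a minimizing sequence $\mathbf y^{(n)}$ with $\tilde f(\mathbf y^{(n)})\to\inf\tilde f=:c$, choosing representatives normalized by $\mathbf 1^T\mathbf y^{(n)}=0$, and distinguish two cases. If $\{\mathbf y^{(n)}\}$ has a bounded subsequence, pass to a convergent one $\mathbf y^{(n)}\to\mathbf y^\star$; then $c=\tilde f(\mathbf y^\star)>0$ because $p$, the truncation to the face $\newt[p]$ itself, is nowhere zero on $\PP^E_+$. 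Otherwise $\|\mathbf y^{(n)}\|\to\infty$, and after passing to a subsequence $\mathbf y^{(n)}/\|\mathbf y^{(n)}\|\to\mathbf u$ with $|\mathbf u|=1$ and $\mathbf 1^T\mathbf u=0$ (forced by the normalization). Let $\gamma$ be the face of $\newt[p]$ on which $v\mapsto\mathbf u^Tv$ is maximized; since $\mathbf u$ is not proportional to $\mathbf 1$ while $\newt[p]$ lies in the hyperplane $\{v:\mathbf 1^Tv=d\}$, this face is \emph{proper}, so $\dim\gamma<\dim\newt[p]$. The initial form $p_\gamma=\sum_{k\in\gamma\cap\supp(p)}a_k\xx^k$ has Newton polytope $\gamma$ and is again completely non-vanishing on $\PP^E_+$ (every face of $\gamma$ is a face of $\newt[p]$, and the corresponding truncations of $p_\gamma$ and of $p$ coincide), so the induction hypothesis applies to it and yields $C_1^{(\gamma)}>0$ with $|p_\gamma(e^{\mathbf y})|/p_\gamma^\tr(e^{\mathbf y})\ge C_1^{(\gamma)}$ for all $\mathbf y$. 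It then remains to show $\tilde f(\mathbf y^{(n)})\ge|p_\gamma(e^{\mathbf y^{(n)}})|/p_\gamma^\tr(e^{\mathbf y^{(n)}})-o(1)\ge C_1^{(\gamma)}-o(1)$, forcing $c\ge C_1^{(\gamma)}>0$. In either case $c>0$, so any $C_1\in(0,c]$ works.

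The single genuine estimate --- and the step I expect to be the main obstacle --- is that last inequality: along the sequence the monomials of $p$ with exponent outside $\gamma$ must become negligible compared with those inside $\gamma$. Writing $\mathbf y^{(n)}=t_n\mathbf u+\mathbf z^{(n)}$ with $t_n=\mathbf u^T\mathbf y^{(n)}\to\infty$ and $\mathbf z^{(n)}\perp\mathbf u$, one only knows $\|\mathbf z^{(n)}\|=o(t_n)$: the transverse part need not be bounded, which is precisely why a single compactness argument on $\PP^E_+$ does not close the proof and the induction over faces is needed. Concretely, for $k\in\gamma$ one has $\mathbf k^T\mathbf y^{(n)}=\ell\,t_n+\mathbf k^T\mathbf z^{(n)}$ with $\ell=\max_{v\in\newt[p]}\mathbf u^Tv$, whereas for $k\notin\gamma$ one has $\mathbf k^T\mathbf y^{(n)}\le(\ell-\delta)\,t_n+\mathcal O(\|\mathbf z^{(n)}\|)$ with $\delta=\ell-\max_{k\notin\gamma}\mathbf u^Tk>0$; hence the ratio of any ``outside'' monomial $\xx^k$, evaluated at $e^{\mathbf y^{(n)}}$, to $p_\gamma^\tr(e^{\mathbf y^{(n)}})$ is at most $e^{-\delta t_n+\mathcal O(\|\mathbf z^{(n)}\|)}=e^{-\delta t_n+o(t_n)}\to 0$. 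Feeding this into $|p|\ge|p_\gamma|-\sum_{k\notin\gamma}|a_k|\,\xx^k$ and $p^\tr=\max\big(p_\gamma^\tr,\max_{k\notin\gamma}\xx^k\big)$ gives the claimed bound. Apart from this, the argument uses only compactness, continuity, the triangle inequality, and the bookkeeping that ``completely non-vanishing'' is inherited by initial forms.
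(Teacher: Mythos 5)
The paper itself does not prove this theorem---it is imported verbatim from \cite[Theorem~8]{Borinsky:2020rqs}---so the benchmark is that proof, which in essence compactifies the (logarithmic) simplex so that boundary strata correspond to faces of $\newt[p]$ and uses complete non-vanishing of the face truncations there. Your argument (upper bound by the triangle inequality; lower bound by induction over faces, with minimizing sequences and initial forms $p_\gamma$ inheriting complete non-vanishing) is a hands-on version of that same idea, and most of it is sound, including the quantitative estimate showing that monomials with exponents outside $\gamma$ are suppressed by $e^{-\delta t_n+o(t_n)}$ relative to $p_\gamma^\tr$.

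There is, however, one genuine gap: the claim that the face $\gamma$ picked out by the limit direction $\mathbf u$ is \emph{proper}. You justify this by ``$\mathbf u$ is not proportional to $\mathbf 1$ while $\newt[p]$ lies in $\{\mathbf v:\mathbf 1^T\mathbf v=d\}$'', but that implication only holds when $\newt[p]$ is full-dimensional inside that hyperplane. If $\newt[p]$ has higher codimension---e.g.\ $p=x_0^2x_1+x_1^2x_2$, whose Newton polytope is a segment in $\RR^3$---there exist unit vectors $\mathbf u\perp\mathbf 1$ on which $\mathbf v\mapsto\mathbf u^T\mathbf v$ is constant over $\supp(p)$ (for the example, $\mathbf u\propto(0,1,-1)$), and the ratio $|p(e^{\mathbf y})|/p^\tr(e^{\mathbf y})$ is invariant under $\mathbf y\mapsto\mathbf y+s\mathbf u$ for such $\mathbf u$. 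A minimizing sequence normalized only by $\mathbf 1^T\mathbf y^{(n)}=0$ can therefore escape to infinity in such a direction, in which case $\gamma=\newt[p]$ is not proper, $\dim\gamma$ does not drop, and the induction hypothesis cannot be invoked; case 2 as written does not close. The repair is straightforward: let $W=\{\mathbf u\in\RR^{E}:\mathbf k^T\mathbf u \text{ equal for all } \mathbf k\in\supp(p)\}$ (which contains $\RR\mathbf 1$ by homogeneity), note that $\tilde f$ is invariant under translation by $W$, and choose representatives $\mathbf y^{(n)}\perp W$. Then an unbounded minimizing sequence has limit direction $\mathbf u\perp W$ with $\mathbf u\neq 0$, hence $\mathbf u\notin W$, so $\mathbf u^T\mathbf v$ is non-constant on $\newt[p]$ and $\gamma$ is a proper face; with this normalization your induction on $\dim\newt[p]$ goes through as intended.
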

A polynomial $p$ is \emph{completely non-vanishing} on $\PP^E_+$ if it does not vanish in the interior of $\PP^E_+$ and if another technical condition is fulfilled (see \cite[Definition 1]{Nilsson:2013} for a precise definition).

The $\cU$ polynomial is always completely non-vanishing on $\PP^E_+$ and in the pseudo-Euclidean regime also $\cF$ is completely non-vanishing on $\PP^E_+$. We define the associated tropical approximations $\cU^\tr$, $\cF^\tr$ and $\cV^\tr = \cF^\tr/ \cU^\tr$.

Our key assumption for the integration of Feynman integrals in the Minkowski regime is that the approximation property can also be applied to the deformed Symanzik polynomials.

\begin{assumption}
\label{ass:approx}
There are $\lambda$ dependent constants $C_1(\lambda),C_2(\lambda) > 0$ such that for small $\lambda > 0$,
\begin{align} C_1(\lambda) \leq \left| \left(\frac{\cU^\tr(\xx)}{\cU(\XX)}\right)^{D_0/2} \left(\frac{\cV^\tr(\xx)}{\cV(\XX)}\right)^{\omega_0} \right| \leq C_2(\lambda) \quad \text{ for all } \quad \xx \in \PP^E_+ \, , \end{align}
where we recall that $\XX = (X_1,\ldots,X_{|E|})$ and $X_e = x_e \exp\big(-i \lambda \frac{\partial \mathcal V}{\partial x_e} (\xx) \big)$.
\end{assumption}
In the pseudo-Euclidean regime the assumption is fulfilled, as we are allowed to set $\lambda =0$ and use the established approximation property from \cite{Borinsky:2020rqs} on $\cU$ and $\cF$.
In the Minkowski regime, 
Assumption~\ref{ass:approx} can only be fulfilled if there are no Landau singularities, i.e.~solutions to \eqref{eq:landau}. After extensive numerical testing we conjecture that Assumption~\ref{ass:approx} is fulfilled if there are no Landau singularities.
It would be very interesting to give a concise set of conditions for the validity of 
Assumption~\ref{ass:approx} and how it interplays with such singularities. We leave this to future research.

Another highly promising research question is to find a value for $\lambda$ such that the constants $C_1(\lambda)$ and $C_2(\lambda)$ tighten the bounds as much as possible. Finding such an optimal value for $\lambda$ would result in the first entirely \emph{canonical} deformation prescription which does not depend on free parameters.

\subsection{Tropical sampling}
\label{sec:tropsampling}
Intuitively, Assumption~\ref{ass:approx} tells us that the integrands in \eqref{eq:deform} and \eqref{eq:deform_eps} are, except for phase factors, reasonably approximated by the tropical approximation of the undeformed integrand.
To evaluate the integrals \eqref{eq:deform_eps} with \emph{tropical sampling}, as in \cite[Sec.~7.2]{Borinsky:2020rqs}, we define the probability distribution
\begin{equation}\label{eq:mutr}
    \mu^\tr=\frac{1}{I^\tr}
\frac{\prod_{e\in E}x_e^{\nu_e}}{\cU^\tr(\xx)^{D_0/2}\, \cV^\tr(\xx)^{\omega_0}}\,\Omega \, ,
\end{equation}
where $I^\tr$ is a normalization factor, which is chosen such that $\int_{\PP_+^{E}}\mu^\tr=1$. 
As of Assumption~\ref{ass:approx} and the requirement that 
the integrals in \eqref{eq:deform_eps} shall be finite, 
the factor $I^\tr$ must also be finite. 
If $\omega_0=0$, this normalization factor is equal to the 
associated \emph{Hepp bound} of the graph $G$ \cite{Panzer:2019yxl}.
Because $\mu^\tr >0$ for all $\xx \in \PP^E_+$, 
$\mu^\tr$ gives rise to a proper probability distribution on this domain. 

Using the definition of $\mu^\tr$ to rewrite \eqref{eq:deform_eps} results in
\begin{gather} \begin{gathered} \label{eq:deform_eps_trop} \cI= \frac{ \Gamma(\omega_0 + \epsilon L) }{\prod_{e\in E} \Gamma(\nu_e) } \sum_{k =0}^\infty \frac{ \epsilon^k }{k!} \mathcal I_k, \quad \text{with} \\[6pt] \mathcal I_k = I^\tr \int_{\PP_+^{E}} \frac{ \left(\prod_{e\in E}(X_e/x_e)^{\nu_e} \right) \det \mathcal J_\lambda(\xx)   }{ \left(\cU\left(\XX\right)/\cU^\tr\left(\xx\right)\right)^{D_0/2} \cdot \left(\cV\left(\XX\right)/\cV^\tr\left(\xx\right)\right)^{\omega_0} } \log^k \left( \frac{\cU(\XX)}{\cV(\XX)^L} \right) \mu^\tr \, . \end{gathered} \end{gather}
We will evaluate the integrals above by 
sampling from the probability distribution $\mu^\tr$. 

In \cite{Borinsky:2020rqs}, two different methods to generate samples from $\mu^\tr$ were introduced. The first method \cite[Sec.~5]{Borinsky:2020rqs}, which does not take the explicit structure of $\cU$ and $\cF$ into account, requires the computation of a triangulation of the refined normal fans of the Newton polytopes of $\cU$ and $\cF$. Once such a triangulation is computed, arbitrarily many samples from $\mu^\tr$ can be generated with little computational effort. 
 Unfortunately, obtaining such a triangulation is a highly computationally demanding process. 

The second method \cite[Sec.~6]{Borinsky:2020rqs} to generate samples from the probability distribution $\mu^\tr$ makes use of a particular property of the Newton polytopes of $\cU$ and $\cF$ which allows to bypass the costly triangulation step. This second method additionally has the advantage that it is relatively straightforward to implement. This faster method of sampling from $\mu^\tr$ relies on the Newton polytopes of $\cU$ and $\cF$ being \emph{generalized permutahedra}.

For the program \ft we will make use of this second method.  Our tropical sampling algorithm to produce samples from $\mu^\tr$ is essentially equivalent to the one published with  \cite{Borinsky:2020rqs}. 

\subsection{Base polytopes and generalized permutahedra}
\label{sec:genperm}
A fantastic property of generalized permutahedra is that they come with a \emph{canonical normal fan} which greatly facilitates the sampling of $\mu^\tr$, see \cite[Theorem 27 and Algorithm 4]{Borinsky:2020rqs}. 
Here, we briefly explain the necessary notions. As a start, we define a more general class of polytopes first and discuss restrictions later.

\paragraph{Base polytopes}
Consider a function $z: \mathbf{2}^{E} \rightarrow \RR$ that assigns a number to each subset of $E$, the edge set of our Feynman graph $G$.
In the following we often identify a subset of $E$ with a
subgraph of $G$ and use the respective terms interchangeably. 
So, 
$z$ assigns a number to each subgraph of $G$. We define $\mathbf P[z]$ to be the subset of $\RR^{|E|}$ that  
consists of all points $(a_0,\ldots,a_{|E|-1}) \in \RR^{|E|}$ which fulfill
$\sum_{e\in E} a_e = z(E)$ and the $2^{|E|}-1$ inequalities
\begin{align} \sum_{e \in \gamma} a_e &\geq z(\gamma) \quad \text{ for all } \quad \gamma \subsetneq E\,. \end{align}
Clearly, these inequalities describe a convex bounded domain, i.e.~a \emph{polytope}.
This polytope $\mathbf P[z]$ associated to an arbitrary function $z:\mathbf{2}^E \rightarrow \RR$ is called the \emph{base polytope}. %

\paragraph{Generalized permutahedra}
The following is a special case of a theorem by Aguiar and Ardila who realized that numerous seemingly different structures from combinatorics can be understood using the same object: The \emph{generalized permutahedron} which was initially defined by  Postnikov \cite{Postnikov2009}.
\begin{theorem}[{\cite[Theorem 12.3]{aguiar2017hopf} and the references therein}]
\label{thm:thmgp}
The polytope $\mathbf P[z]$ is a \emph{generalized permutahedron} if and only if the function $z$ is \emph{supermodular}. That means, $z$ fulfills the inequalities
\begin{align} \label{eq:supermod} z(\gamma) + z(\delta) \leq z(\gamma \cup \delta) + z(\gamma \cap \delta) \text{ for all pairs of subgraphs } \gamma, \delta \subset E\,. \end{align}
\end{theorem}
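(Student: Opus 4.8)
The plan is to use the standard dictionary between generalized permutahedra and the braid fan: a polytope in $\RR^{|E|}$ is a generalized permutahedron if and only if its normal fan coarsens the braid fan, equivalently if and only if every edge is parallel to a difference $\mathbf{e}_a-\mathbf{e}_b$ of two standard basis vectors of $\RR^{|E|}$ (see \cite{Postnikov2009,aguiar2017hopf}). Throughout I would fix the convention that, for a functional $c\in\RR^{|E|}$, the face $\mathbf{P}[z]_c$ is the one on which $a\mapsto\sum_{e\in E}c_e a_e$ is \emph{minimized}, and normalise $z(\emptyset)=0$. I would also record the elementary fact that $\mathbf{P}[z]$ depends only on the \emph{tight} function $\bar z(\gamma):=\min_{a\in\mathbf{P}[z]}\sum_{e\in\gamma}a_e\ge z(\gamma)$, namely $\mathbf{P}[z]=\mathbf{P}[\bar z]$; so for the ``only if'' direction it suffices to prove that $\bar z$ is supermodular, which is the content of the bijective formulation in \cite[Thm.~12.3]{aguiar2017hopf}.

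For the ``if'' direction I would assume $z$ supermodular and run the greedy algorithm. Given a generic functional with $c_{e_1}>c_{e_2}>\dots>c_{e_{|E|}}$, define $v$ by $v_{e_k}=z(\{e_1,\dots,e_k\})-z(\{e_1,\dots,e_{k-1}\})$. Three things need checking: (i) $v\in\mathbf{P}[z]$, i.e.\ $\sum_{e\in\gamma}v_e\ge z(\gamma)$ for \emph{every} $\gamma$, which is the only place supermodularity is genuinely used; it follows by induction on $|\gamma|$, peeling off the largest-indexed element of $\gamma$ and applying \eqref{eq:supermod} to that element's prefix and to $\gamma$. (ii) $v$ minimises $\sum_{e\in E}c_e a_e$ over $\mathbf{P}[z]$; this is an Abel-summation estimate using that $\sum_{e\in E}a_e=z(E)$ is fixed, the prefix inequalities $\sum_{j\le k}a_{e_j}\ge z(\{e_1,\dots,e_k\})$, and the positivity of the consecutive differences $c_{e_k}-c_{e_{k+1}}$, with the bound saturated exactly at $v$. (iii) Hence $v$ depends only on the linear order of the coordinates of $c$, so the normal cone of the vertex $v$ contains the whole top-dimensional braid cone of that order; as these cones cover $\RR^{|E|}$, the normal fan of $\mathbf{P}[z]$ coarsens the braid fan and $\mathbf{P}[z]$ is a generalized permutahedron.

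For the ``only if'' direction I would assume $P=\mathbf{P}[z]$ is a generalized permutahedron and, by the reduction above, that $z=\bar z$, so $z(\gamma)=\min_{a\in P}\sum_{e\in\gamma}a_e$. Given subgraphs $\gamma,\delta$, consider the functional $c=\mathbf{1}_{\gamma\cup\delta}+\mathbf{1}_{\gamma\cap\delta}$, where $\mathbf{1}_\alpha$ denotes the indicator vector of $\alpha\subset E$. A direct check shows that both $\mathbf{1}_{\gamma\cup\delta}$ and $\mathbf{1}_{\gamma\cap\delta}$ lie in the closed braid cone $\tau$ whose relative interior contains $c$. Since the normal fan of $P$ coarsens the braid fan, the normal cone $N_F$ of the face $F=P_c$ is a union of braid cones; as it contains $c$, and $\tau$ is a face of every braid cone meeting its relative interior, we get $\tau\subseteq N_F$, hence $\mathbf{1}_{\gamma\cup\delta},\mathbf{1}_{\gamma\cap\delta}\in N_F$. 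Therefore every point $a^*$ of $F$ minimises both $\sum_{e\in\gamma\cup\delta}a_e$ and $\sum_{e\in\gamma\cap\delta}a_e$ over $P$, and using the pointwise identity $\mathbf{1}_\gamma+\mathbf{1}_\delta=\mathbf{1}_{\gamma\cup\delta}+\mathbf{1}_{\gamma\cap\delta}$,
\[
z(\gamma\cup\delta)+z(\gamma\cap\delta)=\sum_{e\in\gamma\cup\delta}a^*_e+\sum_{e\in\gamma\cap\delta}a^*_e=\sum_{e\in\gamma}a^*_e+\sum_{e\in\delta}a^*_e\ge z(\gamma)+z(\delta),
\]
which is precisely \eqref{eq:supermod}.

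The step I expect to be the main obstacle is (i) in the ``if'' direction: verifying that the greedy point $v$ actually lies in $\mathbf{P}[z]$, i.e.\ that all $2^{|E|}$ defining inequalities hold and not merely the $|E|$ prefix equalities built into its construction. This is where the full force of supermodularity is needed and where the classical -- but not purely formal -- polymatroid-type induction must be carried out carefully. Everything else is bookkeeping with the braid fan; the only delicate point there is the lemma that a convex cone which is a union of braid cones and meets the relative interior of a braid cone $\tau$ contains all of $\tau$, which powers the common-minimiser argument in the converse but follows quickly once one notes that $\tau$ is a face of each braid cone that contains it.
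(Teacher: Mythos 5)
Your argument is correct, but be aware that the paper does not actually prove this statement: it is imported from Aguiar--Ardila \cite[Theorem~12.3]{aguiar2017hopf} and then adopted as the working definition of a generalized permutahedron, so what you have written is a self-contained reconstruction of the classical proof behind that citation rather than a variant of anything in the text. Your ``if'' direction is Edmonds' greedy/polymatroid argument transposed to supermodular lower bounds (greedy vertex, prefix tightness, Abel summation, one vertex per maximal braid cone), and your ``only if'' direction is the standard support-function argument via a common minimizer of $\mathbf 1_{\gamma\cup\delta}$ and $\mathbf 1_{\gamma\cap\delta}$ inside a shared closed braid cone; both are sound, including the two steps you single out as delicate (the induction showing the greedy point satisfies all $2^{|E|}$ inequalities, which is indeed the only place supermodularity is used, and the face lemma for braid cones). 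Your reduction to the tight function $\bar z(\gamma)=\min_{a\in\mathbf P[z]}\sum_{e\in\gamma}a_e$ is not cosmetic but necessary: read literally, the ``only if'' direction of the statement is false for non-tight $z$. For instance, on $E=\{0,1,2\}$ take $z(\{0\})=z(\{1\})=z(\{2\})=-1$, $z(\{0,1\})=-10$, $z(\{0,2\})=z(\{1,2\})=-2$, $z(E)=0$; the inequality for $\{0,1\}$ is redundant, $\mathbf P[z]$ is a dilated simplex and hence a generalized permutahedron, yet $z(\{0\})+z(\{1\})=-2>-10=z(\{0,1\})+z(\emptyset)$. Aguiar--Ardila phrase their Theorem~12.3 as a bijection between generalized permutahedra and supermodular functions (equivalently, one must take $z$ to be the support function, as you do), and your $\bar z$ step is exactly what reconciles the paper's looser wording with the cited result; in the paper's applications ($z_\cU$, $z_\cF$) the relevant direction is ``supermodular $\Rightarrow$ generalized permutahedron'', which your greedy argument covers. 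The only remaining caveats are routine: one should assume $\mathbf P[z]\neq\emptyset$ so that $\bar z$ is defined, and in your step (iii) one still needs the (standard) remark that containment of every maximal braid cone in a vertex normal cone forces the entire braid fan to refine the normal fan.
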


Because other properties of generalized permutahedra are not of central interest in this paper, we will take Theorem~\ref{thm:thmgp} as our definition of these special polytopes. 
Important for us is that for many kinematic situations the Newton polytopes of the Symanzik polynomials are of this type.

Let $L_\gamma$ denote the number of loops of the subgraph $\gamma$, then we have the following theorem due to Schultka \cite{Schultka:2018nrs}:
\begin{theorem}\mbox{}
\label{thm:UgenP}
The Newton polytope $\newt[\cU]$ of $\cU$ is equal to the base polytope $\mathbf P[z_\cU]$ with $z_\cU$ being the function $z_\cU(\gamma) = L_\gamma$. Moreover, $z_\cU$ is supermodular. Hence, by Theorem~\ref{thm:thmgp}, $\newt[\cU]$ is a generalized permutahedron. 
\end{theorem}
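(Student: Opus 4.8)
The plan is to identify $\newt[\cU]$ with the base polytope of the cographic matroid of $G$ and then apply Theorem~\ref{thm:thmgp}. First I would read off from the combinatorial formula for $\cU$ in \eqref{eq:polyUF} that $\cU(\xx)=\sum_T\prod_{e\notin T}x_e$ has support exactly $\{\mathbf 1_{E\setminus T}\subset\ZZ_{\ge 0}^{|E|}:T\text{ a spanning tree of }G\}$, where $\mathbf 1_S=\sum_{e\in S}\mathbf e_e$ and distinct spanning trees give distinct (complementary) monomials with coefficient $1$. Hence $\newt[\cU]=\operatorname{conv}\{\mathbf 1_{E\setminus T}\}$, and since $G$ is one-particle irreducible and thus connected, every $E\setminus T$ has $|E|-(|V|-1)=L$ elements, so $\newt[\cU]$ lies on the hyperplane $\sum_e a_e=L=L_E=z_\cU(E)$. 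The easy inclusion $\newt[\cU]\subseteq\mathbf P[z_\cU]$ I would prove vertex by vertex: for $\gamma\subsetneq E$ one has $\sum_{e\in\gamma}(\mathbf 1_{E\setminus T})_e=|\gamma|-|\gamma\cap T|$, and $\gamma\cap T$ is a forest contained in $\gamma$, so $|\gamma\cap T|\le r(\gamma)$, the rank of $\gamma$ in the graphic matroid; since $r(\gamma)=|\gamma|-L_\gamma$ this gives $\sum_{e\in\gamma}(\mathbf 1_{E\setminus T})_e\ge L_\gamma=z_\cU(\gamma)$.

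For the reverse inclusion I would use the standard greedy description of the vertices of a base polytope $\mathbf P[z]$ of a supermodular function (a special case of Edmonds' matroid/polymatroid polytope theorem): each vertex arises from a linear order $e_1<\dots<e_{|E|}$ of $E$ via $a_{e_i}=z_\cU(\{e_1,\dots,e_i\})-z_\cU(\{e_1,\dots,e_{i-1}\})=L_{\{e_1,\dots,e_i\}}-L_{\{e_1,\dots,e_{i-1}\}}$. Adjoining one edge changes the loop number by $0$ or $1$, so each such vertex is a $0/1$ vector whose support consists of the edges that close a cycle with their predecessors, i.e.\ the complement of the spanning tree built greedily in that order (again a spanning tree because $G$ is connected). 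Thus every vertex of $\mathbf P[z_\cU]$ equals some $\mathbf 1_{E\setminus T}$, so $\mathbf P[z_\cU]\subseteq\newt[\cU]$ and therefore $\newt[\cU]=\mathbf P[z_\cU]$. An alternative is to quote Edmonds' matroid polytope theorem directly after rewriting $\sum_{e\in\gamma}a_e\ge L_\gamma$ as $\sum_{e\in S}a_e\le r^*(S)$ by complementation, using matroid duality $r^*(S)=|S|-r(E)+r(E\setminus S)$ together with $r(E)=|V|-1$.

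Finally I would check supermodularity of $z_\cU$: with $r$ the rank function of the graphic matroid of $G$ one has $z_\cU(\gamma)=L_\gamma=|\gamma|-r(\gamma)$, where $|\cdot|$ is modular and $r$ submodular (a matroid axiom). Then $|\gamma|+|\delta|=|\gamma\cup\delta|+|\gamma\cap\delta|$ and $r(\gamma\cup\delta)+r(\gamma\cap\delta)\le r(\gamma)+r(\delta)$ combine to give $z_\cU(\gamma)+z_\cU(\delta)\le z_\cU(\gamma\cup\delta)+z_\cU(\gamma\cap\delta)$, so $z_\cU$ is supermodular, and Theorem~\ref{thm:thmgp} then yields that $\newt[\cU]=\mathbf P[z_\cU]$ is a generalized permutahedron. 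The main obstacle is the reverse inclusion: showing $\mathbf P[z_\cU]$ has no vertices beyond the complements of spanning trees is exactly the content of Edmonds' matroid polytope theorem, so the honest work is either to cite it or to reproduce the short greedy/exchange argument above; the graph-theoretic dictionary (spanning trees $\leftrightarrow$ bases of the cographic matroid, loop number $\leftrightarrow$ matroid nullity) and the bookkeeping with $r$ and $r^*$ are routine.
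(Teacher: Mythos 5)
Your proof is correct, and it is in essence a fully worked-out version of what the paper handles by citation: the paper's proof of Theorem~\ref{thm:UgenP} simply points to Schultka and to the observation of Panzer that $\newt[\cU]$ is a matroid polytope (the base polytope of the cographic matroid of $G$), with the generalized permutahedron property then quoted from Aguiar--Ardila. You supply exactly the content behind those citations: the support of $\cU$ consists of the indicator vectors $\mathbf 1_{E\setminus T}$ of complements of spanning trees, the inclusion $\newt[\cU]\subseteq \mathbf P[z_\cU]$ follows from $|\gamma\cap T|\le r(\gamma)=|\gamma|-L_\gamma$, the reverse inclusion is Edmonds' greedy/matroid polytope theorem (your dual-rank bookkeeping $r^*(S)=L-L_{E\setminus S}$ is right), and supermodularity of $z_\cU=|\cdot|-r$ follows from submodularity of the graphic matroid rank. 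The only substantive point to keep straight is logical order: the greedy description of the vertices of $\mathbf P[z_\cU]$ presupposes supermodularity, so that check should come before (or you should cite Edmonds' theorem for the cographic matroid directly, as in your alternative), but since you do prove it, this is a matter of presentation rather than a gap. What your route buys is a self-contained, elementary argument; what the paper's route buys is brevity and the placement of the statement inside the general matroid-polytope/generalized-permutahedron framework it reuses later for $\cF$.
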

\begin{proof}
See \cite[Sec.~4]{Schultka:2018nrs} and the references therein.
In \cite{Panzer:2019yxl}, it was observed that $\newt[\cU]$ is a \emph{matroid polytope}, 
which by \cite[Sec.~14]{aguiar2017hopf} also proves the statement.
\end{proof}

Because $\newt[\cU]$ is a generalized permutahedron, we also say that $\cU$ has the generalized permutahedron property.

\paragraph{Generalized permutahedron property of the $\cF$ polynomial}

For the second Symanzik $\cF$ polynomial the situation is more tricky. 
We need the notion of \emph{mass-momentum spanning} subgraphs which was defined by Brown \cite{Brown:2015fyf} (see also \cite[Sec.~4]{Schultka:2018nrs} for an interesting relationship to the concept of \emph{s-irreducibility} \cite{Chetyrkin:1983wh} or \cite{Speer:1975dc} where related results were obtained or \cite{Beekveldt:2020kzk} for relations to the $R^\star$ operation). We use the following slightly generalized version of Brown's definition (see also \cite[Sec.~7.2]{Borinsky:2020rqs}):
We call a subgraph $\gamma \subset E$ mass-momentum spanning if 
the second Symanzik polynomial of the cograph $G/\gamma$ vanishes identically $\cF_{G/\gamma} = 0$.

\begin{theorem}
\label{thm:FGP}
In the Euclidean regime with generic kinematics, the 
Newton polytope $\newt[\cF]$ is a generalized permutahedron. It is equal to the base polytope $\mathbf P[z_\cF]$ with 
the function $z_\cF$ defined for all subgraphs $\gamma$ by 
$z_\cF(\gamma) = L_{\gamma} + 1$ 
if $\gamma$ is mass-momentum spanning and 
$z_\cF(\gamma) = L_{\gamma}$ otherwise.
 Consequently, this function $z_\cF : \mathbf 2^E \rightarrow \RR$ is supermodular, i.e.\ it fulfills \eqref{eq:supermod}.
\end{theorem}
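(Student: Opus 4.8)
The plan is to establish the two claims of Theorem~\ref{thm:FGP} in sequence: first that $\newt[\cF] = \mathbf P[z_\cF]$, and then that $z_\cF$ is supermodular (which by Theorem~\ref{thm:thmgp} yields the generalized permutahedron property). Rather than proving supermodularity by brute-force case analysis, I would observe that it follows \emph{a posteriori}: once I know that $\newt[\cF]$ is genuinely a polytope equal to $\mathbf P[z_\cF]$, and that the inequalities $\sum_{e\in\gamma}a_e \geq z_\cF(\gamma)$ are all facet-defining or redundant (i.e.\ none is violated), supermodularity of $z_\cF$ is forced because the support function of $\mathbf P[z_\cF]$ must be well-defined. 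So the real content is the identification $\newt[\cF] = \mathbf P[z_\cF]$, and the subsequent deduction is the standard fact that a base polytope $\mathbf P[z]$ equals $\mathbf P[\hat z]$ where $\hat z$ is the supermodular ``lower envelope'' of $z$, and if $z = \hat z$ then $z$ is supermodular.

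For the identification, I would start from the combinatorial formula \eqref{eq:polyUF}, $\cF(\xx) = -\sum_F p(F)^2 \prod_{e\notin F} x_e + \cU(\xx)\sum_e m_e^2 x_e$. The vertices of $\newt[\cF]$ correspond to the monomials $\prod_{e\notin F} x_e$ (from the two-forest sum) and $x_{e'}\prod_{e\notin T}x_e$ (from the mass term, $T$ a spanning tree), and in the Euclidean regime with generic kinematics no cancellation occurs: the two-forest coefficients $-p(F)^2 \geq 0$ are strictly positive exactly when $F$ is ``momentum-carrying'' and genericity prevents a forest term and a mass term from cancelling. This is the step where the kinematic hypothesis is used essentially, and I would isolate it as a lemma: in the Euclidean generic regime, $\supp(\cF)$ is precisely the union of the exponent vectors coming from all momentum-carrying two-forests and all (spanning tree $+$ one extra massive edge) configurations, with nothing lost. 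Then I would compute, for each subgraph $\gamma\subsetneq E$, the minimum of $\sum_{e\in\gamma} k_e$ over $k\in\supp(\cF)$ and show it equals $z_\cF(\gamma)$: a monomial $\prod_{e\notin F}x_e$ contributes $|\gamma \setminus F| \geq L_\gamma$ with equality achievable iff one can choose the two-forest $F$ so that its edges outside contain a spanning forest of $\gamma$, and the ``$+1$'' appears precisely when $\gamma$ fails to be mass-momentum spanning would force an extra edge — unwinding the definition $\cF_{G/\gamma}=0$ shows $z_\cF(\gamma) = L_\gamma + 1$ exactly in the non-mass-momentum-spanning case. The analysis of the mass-term monomials runs parallel and never gives a smaller value. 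This is essentially Schultka's argument, so I would cite \cite[Sec.~4]{Schultka:2018nrs} for the details of the minimum computation and the matroid-theoretic bookkeeping, while spelling out how genericity and the Euclidean condition enter.

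The main obstacle is the bookkeeping around the mass-momentum-spanning condition: one must check that the combinatorial quantity $\min_{k\in\supp(\cF)}\sum_{e\in\gamma}k_e$ really does detect whether $\cF_{G/\gamma}$ vanishes, which requires translating the contraction $G/\gamma$ and the vanishing of its $\cF$ polynomial into a statement about which two-forests of $G$ avoid $\gamma$ in the right way, and simultaneously tracking the mass term $\cU\sum_e m_e^2 x_e$, which can shift the minimum for subgraphs $\gamma$ containing massive edges. Handling the interaction of massless and massive edges, and confirming that the mass term never produces a monomial lying below the base-polytope inequalities, is the delicate part; everything else (that $\mathbf P[z_\cF]$ is a polytope, that its defining inequalities match the computed minima, and the final deduction of supermodularity) is then routine given Theorems~\ref{thm:thmgp} and~\ref{thm:UgenP}. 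I would also remark that the $\cU$ case (Theorem~\ref{thm:UgenP}) is the $m_e = 0$, no-external-momenta specialization, so the $\cF$ argument should be organized to reduce to it on the massless sub-structure.
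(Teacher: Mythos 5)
Your overall plan (identify $\newt[\cF]$ with $\mathbf P[z_\cF]$ via the support of $\cF$, citing \cite[Sec.~4]{Schultka:2018nrs} for the bookkeeping) points at the same source the paper's proof cites, but your logical scaffolding has a genuine gap: supermodularity of $z_\cF$ does \emph{not} follow a posteriori from the equality $\newt[\cF]=\mathbf P[z_\cF]$. The paper itself supplies counterexamples: for the massless boxes of Figures~\ref{fig:onshell1}--\ref{fig:onshell3} one has $\newt[\cF]=\mathbf P[z_\cF]$ (see Observation~\ref{obs:GPP} and the surrounding discussion), yet $z_\cF$ violates \eqref{eq:supermod} and the polytope is not a generalized permutahedron. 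There is likewise no ``standard fact'' that $\mathbf P[z]=\mathbf P[\hat z]$ for a supermodular lower envelope $\hat z$: by Theorem~\ref{thm:thmgp}, a base polytope admits a supermodular defining function only if it already \emph{is} a generalized permutahedron, so your deduction assumes what is to be proved. Supermodularity is exactly the place where the Euclidean and genericity hypotheses must enter, and where the paper's cited argument uses Brown's infrared factorization \cite[Theorem~2.7]{Brown:2015fyf}: since $z_\cF(\gamma)=L_\gamma+\delta^{\mathrm{m.m.}}_\gamma$ and $L_\gamma$ is supermodular by Theorem~\ref{thm:UgenP}, the crux is supermodularity of the mass-momentum-spanning indicator, essentially that the intersection of two mass-momentum spanning subgraphs is again mass-momentum spanning (upward closure being easy); this is a kinematic statement that fails at exceptional points, and no polytope identification can substitute for it.

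Two further problems sit in the identification step itself. First, your criterion is inverted: the minimal $\gamma$-degree of $\cF$ is $L_\gamma+1$ precisely when $\gamma$ \emph{is} mass-momentum spanning (when $\cF_{G/\gamma}=0$ the would-be leading term $\cU_\gamma\,\cF_{G/\gamma}$ of the factorization vanishes and the degree jumps by one), not in the non-mass-momentum-spanning case as you write; e.g.\ for the massless one-loop bubble with $\gamma$ a single edge, $\cF=-p^2x_0x_1$ has $\gamma$-degree $1=L_\gamma+1$ and $\gamma$ is mass-momentum spanning. Second, even granting the corrected computation, verifying $\min_{k\in\supp(\cF)}\sum_{e\in\gamma}k_e=z_\cF(\gamma)$ for all $\gamma$ only yields the inclusion $\newt[\cF]\subseteq\mathbf P[z_\cF]$ together with tightness of the inequalities; distinct polytopes can share all their support values in the $0$--$1$ directions (these values determine a polytope only within the class of generalized permutahedra, whose normal fans coarsen the braid fan), so equality again needs the factorization/generalized-permutahedron input rather than following from the minimum computation alone. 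In short, the logical order of your two steps must be reversed: establish supermodularity (or the factorization property underlying it) first, and then deduce the facet description $\newt[\cF]=\mathbf P[z_\cF]$, which is how the argument in \cite[Sec.~4]{Schultka:2018nrs} that the paper cites actually proceeds.
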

\begin{proof}
This has also been proven in \cite[Sec.~4]{Schultka:2018nrs}. The proof relies on a special infrared factorization property of $\cF$ that was discovered by Brown \cite[Theorem~2.7]{Brown:2015fyf}.
\end{proof}

We explicitly state the following generalization of Theorem~\ref{thm:FGP}:
\begin{theorem}
\label{thm:generic}
Theorem~\ref{thm:FGP} holds in all regimes
if the kinematics are generic.
\end{theorem}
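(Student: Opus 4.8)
The plan is to deduce Theorem~\ref{thm:generic} from Theorem~\ref{thm:FGP} by a comparison argument, using that the Newton polytope $\newt[\cF]$ of a polynomial depends on its coefficients only through which of them vanish. Fixing the graph $G$ and the edge masses, and given a generic matrix $\mathcal P$ (in an arbitrary regime), I would introduce an auxiliary negative semi-definite matrix $\mathcal P^{\textrm{Eucl}}$ with vanishing row and column sums, with the same external-vertex set and masses as $\mathcal P$, and with generic kinematics; such a matrix exists, e.g.\ by choosing generic auxiliary momenta $\widetilde p_v \in \RR^{|V|}$ supported on $V^{\textrm{ext}}$ and setting $\mathcal P^{\textrm{Eucl}} = -\mathcal Q^T \mathcal Q$. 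The goal is then to show that $\cF$ has the same support for $\mathcal P$ as for $\mathcal P^{\textrm{Eucl}}$, and likewise for every cograph polynomial $\cF_{G/\gamma}$; Theorem~\ref{thm:FGP}, which applies to $\mathcal P^{\textrm{Eucl}}$, then transfers the whole conclusion, including the supermodularity of $z_\cF$.

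First I would read off the coefficients of $\cF$ from the combinatorial formula \eqref{eq:polyUF}. Every monomial of $\cF$ has degree $L+1$; it is either squarefree, in which case it equals $\prod_{e\notin F} x_e$ for a unique spanning two-forest $F$ and carries the coefficient $c_F = -p(F)^2 + \sum_{e' \in E_F} m_{e'}^2$, where $E_F$ is the set of edges joining the two components of $F$ (equivalently, the edges $e'$ for which $F\cup\{e'\}$ is a spanning tree); or it has a single exponent equal to $2$, in which case it comes entirely from the term $\cU\cdot\sum_e m_e^2 x_e$ and its coefficient does not depend on $\mathcal P$. Since $G$ is connected, $E_F$ is non-empty, and the monomials of the second kind match trivially between $\mathcal P$ and $\mathcal P^{\textrm{Eucl}}$; everything thus reduces to controlling the vanishing of the $c_F$.

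Next I would use genericity to do exactly this. Writing $V' = V_1(F) \cap V^{\textrm{ext}}$ for one of the two components $V_1(F)$ of $F$, the contribution of internal vertices drops out and $p(F)^2 = \sum_{u,v \in V'} \mathcal P^{u,v}$. If neither component of $F$ contains all of $V^{\textrm{ext}}$, then $V'$ is a proper non-empty subset of $V^{\textrm{ext}}$, and the genericity condition with this $V'$ and with $E' = E_F$ forces $c_F \neq 0$ --- for $\mathcal P$ and for $\mathcal P^{\textrm{Eucl}}$ alike. Otherwise one component of $F$ contains all external vertices, momentum conservation gives $p(F)^2 = \sum_{u,v\in V}\mathcal P^{u,v} = 0$ regardless of the regime, and then $c_F = \sum_{e'\in E_F} m_{e'}^2$ has the same value, hence the same vanishing status, for both configurations. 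In all cases $\supp(\cF)$, and therefore $\newt[\cF]$, coincides with its Euclidean counterpart.

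Finally I would rerun this coefficient analysis with $G/\gamma$ in place of $G$ to see that ``$\gamma$ is mass-momentum spanning'', i.e.\ $\cF_{G/\gamma}\equiv 0$, is regime-independent under genericity, so that the function $z_\cF$ --- and with it its supermodularity, inherited from Theorem~\ref{thm:FGP} --- is unchanged. The main obstacle, I expect, is precisely the bookkeeping here: one must verify that generic kinematics for $G$ descend to generic kinematics for $G/\gamma$, where the edges of $G/\gamma$ are the uncontracted edges of $G$ with unchanged masses, its Gram matrix is obtained from $\mathcal P$ by adding up the rows and columns of the identified vertices, and its external vertices are images of external vertices of $G$; concretely, that any pair $(V'', E'')$ violating genericity of $G/\gamma$ pulls back to a pair $(V', E')$ with $E' \subset E$ non-empty and $V' \cap V^{\textrm{ext}}$ a proper non-empty subset of $V^{\textrm{ext}}$, with the corresponding element sums preserved, contradicting genericity of $G$. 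Granting this, $\newt[\cF]_{\mathcal P} = \newt[\cF]_{\mathcal P^{\textrm{Eucl}}} = \mathbf P[z_\cF]$ with the same supermodular $z_\cF$, which is the assertion.
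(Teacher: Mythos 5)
Your proposal is correct and takes essentially the same route as the paper: genericity prevents cancellations between the mass and momentum parts of $\cF$ in eq.~\eqref{eq:polyUF}, so its monomials---and hence its Newton polytope---coincide with those of the Euclidean generic case, and Theorem~\ref{thm:FGP} transfers. Your extra verification that the mass-momentum-spanning property (and thus $z_\cF$) is unchanged, via genericity descending to the cographs $G/\gamma$, is a detail the paper's terse proof leaves implicit, and your argument for it is sound.
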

\begin{proof}  
The $\cF$ polynomial has the same monomials (with different coefficients) as in
the Euclidean regime with generic kinematics. 
To verify this, note that the conditions for generic kinematics 
prevent cancellations between the mass and momentum part of the $\mathcal F$ polynomial
as given in eq.~\eqref{eq:polyUF}.
So, the respective Newton polytopes coincide.
\end{proof}

There is also the following further generalization of Theorem~\ref{thm:FGP} to Euclidean but exceptional kinematics. 
This generalization is very plausible (see \cite[Example~2.5]{Brown:2015fyf}), but it is a technical challenge to prove it. We will not attempt to include a proof here for the sake of brevity. So, we state this generalization as a conjecture:
\begin{conjecture}
\label{conj:exceptEuclid}
Theorem~\ref{thm:FGP} holds in the Euclidean regime 
for all (also exceptional) kinematics.
\end{conjecture}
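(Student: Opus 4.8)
The plan is to retrace the proof of Theorem~\ref{thm:FGP} from \cite[Sec.~4]{Schultka:2018nrs}, tracking the two effects of dropping genericity in the Euclidean regime: monomials of $\cF$ may now cancel between the momentum and mass parts of \eqref{eq:polyUF}, and the set $\mathcal M \subset \mathbf 2^E$ of \emph{mass--momentum spanning} subgraphs grows, so the target function $z_\cF = L_\bullet + \mathbf 1_{\mathcal M}$ itself changes. One must show (a) the base-polytope identity $\newt[\cF]=\mathbf P[z_\cF]$ for this modified $z_\cF$, and (b) that $z_\cF$ is still supermodular; by Theorem~\ref{thm:thmgp}, (a) and (b) together give that $\newt[\cF]$ is a generalized permutahedron, which is Theorem~\ref{thm:FGP}. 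I would prove (a) and (b) simultaneously, by induction on $|E|$, using Brown's infrared factorization of $\cF$ \cite[Theorem~2.7]{Brown:2015fyf}. The crucial new input over the generic case is a \emph{rigidity} property available in the Euclidean regime: writing $\mathcal P=-\mathcal Q^T\mathcal Q$ with columns $\widetilde p_v$ as in Section~\ref{sec:kinematic_regimes}, every term contributing to a coefficient of $\cF$ in \eqref{eq:polyUF} is non-negative, since $m_e^2\geq 0$ and $-p(F)^2\geq 0$ (it equals the squared Euclidean norm of $\sum_{v\in V'}\widetilde p_v$). Hence a coefficient of $\cF$ vanishes iff each of its terms does; feeding this into the reduced-Laplacian form \eqref{eq:laplaceUF} of $\cF_{G/\gamma}$, one obtains that $\gamma$ is mass--momentum spanning iff every edge outside $\gamma$ is massless and every connected component $C$ of the spanning subgraph $(V,\gamma)$ is \emph{balanced}, $\sum_{v\in C}\widetilde p_v=0$. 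In particular $\mathcal M$ is closed upwards.

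For (b), note that $z_\cF=L_\bullet+\mathbf 1_{\mathcal M}$ can violate \eqref{eq:supermod} only on a pair $\gamma,\delta$ with $\gamma,\delta\in\mathcal M$ but $\gamma\cap\delta\notin\mathcal M$, and only when $L_\bullet$ is \emph{modular} there, $L_\gamma+L_\delta=L_{\gamma\cup\delta}+L_{\gamma\cap\delta}$ --- otherwise the strict supermodularity of $L_\bullet$ (Theorem~\ref{thm:UgenP}) absorbs the defect $\mathbf 1_{\mathcal M}(\gamma)+\mathbf 1_{\mathcal M}(\delta)-\mathbf 1_{\mathcal M}(\gamma\cup\delta)-\mathbf 1_{\mathcal M}(\gamma\cap\delta)\leq 1$. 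A short computation with graphic-matroid ranks shows that $L_\bullet$ is modular on $(\gamma,\delta)$ precisely when, on each block $B$ of the join of the vertex partitions of $\gamma$ and $\delta$, the bipartite graph recording which $\gamma$-component meets which $\delta$-component inside $B$ is a tree, and that then the components of $\gamma\cap\delta$ are exactly the nonempty intersections $C_\gamma\cap C_\delta$ --- the edges of these trees. The balance vectors of the $\gamma$- and $\delta$-components sum to zero at every node of such a tree, so an induction on leaves forces $\sum_{v\in C_\gamma\cap C_\delta}\widetilde p_v=0$ for each edge; together with masslessness of the edges outside $\gamma\cap\delta$ this gives $\gamma\cap\delta\in\mathcal M$, a contradiction. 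This is exactly where negative semi-definiteness of $\mathcal P$ is essential: it upgrades the scalar condition $\bigl(\sum_{v\in C}p_v\bigr)^2=0$ to the vector identity $\sum_{v\in C}\widetilde p_v=0$, which is what propagates along trees.

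For (a), since $\newt[\cF]$ and $\mathbf P[z_\cF]$ both lie in $\{\sum_e a_e=L+1\}$, it suffices to match, for every $\gamma\subsetneq E$, the minimal $\gamma$-degree $\min_{k\in\supp(\cF)}\sum_{e\in\gamma}k_e$ with $z_\cF(\gamma)$. Rescaling the edges of $\gamma$ to zero, Brown's factorization gives that the leading term of $\cF_G$ is $\cU_\gamma\cdot\cF_{G/\gamma}$, of $\gamma$-degree $L_\gamma$, when $\cF_{G/\gamma}\not\equiv 0$ (i.e.\ $\gamma\notin\mathcal M$), and $\cF_\gamma\cdot\cU_{G/\gamma}$, of $\gamma$-degree $L_\gamma+1$, when $\gamma\in\mathcal M$. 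This yields $\min_k\sum_{e\in\gamma}k_e=z_\cF(\gamma)$, hence $\newt[\cF]\subseteq\mathbf P[z_\cF]$, and identifies the corresponding faces of $\newt[\cF]$ as Minkowski sums $\newt[\cU_\gamma]+\newt[\cF_{G/\gamma}]$ (resp.\ $\newt[\cF_\gamma]+\newt[\cU_{G/\gamma}]$). Using (b), $\mathbf P[z_\cF]$ is a generalized permutahedron whose vertices are obtained greedily from linear orders on $E$; walking such an order and applying, at each step, the factorization together with Theorem~\ref{thm:UgenP} for the $\cU$-factors and the inductive hypothesis for the smaller $\cF$-polynomials $\cF_\gamma,\cF_{G/\gamma}$ (which are again $\cF$-polynomials in the Euclidean regime, with mass--momentum spanning data related to $\mathcal M$ through $(G/\gamma)/\beta=G/(\gamma\cup\beta)$), one realizes each vertex of $\mathbf P[z_\cF]$ by an honest monomial of $\cF$. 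This gives $\mathbf P[z_\cF]\subseteq\newt[\cF]$, hence equality.

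The main obstacle, and the reason this is only stated as a conjecture, is making the face-by-face factorization argument work \emph{uniformly} while the coefficients of $\cF$ degenerate. Two delicate points need a careful treatment: one must show that for $\cF_G\not\equiv 0$ the pieces $\cF_{G/\gamma}$ and $\cF_\gamma$ never vanish simultaneously, so that the minimal $\gamma$-degree never exceeds $L_\gamma+1$ and really matches the definition of $z_\cF$; and one must verify that Brown's factorization controls \emph{every} face of $\newt[\cF]$, not just the extreme ones, so that the greedy realization of vertices never stalls on a face whose leading term has collapsed unexpectedly. A clean handling of these two points, combined with the rigidity observation and the tree/balance argument above, should complete the proof of Conjecture~\ref{conj:exceptEuclid}; the Minkowski case is genuinely outside this scheme, as it is precisely the rigidity property that fails there.
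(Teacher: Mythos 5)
First, note what you are being compared against: the paper contains \emph{no} proof of this statement. The authors explicitly decline to prove it (``it is a technical challenge to prove it. We will not attempt to include a proof here''), which is precisely why it appears as Conjecture~\ref{conj:exceptEuclid} rather than a theorem. Your submission must therefore stand on its own, and on its own terms it is a proof \emph{plan}, not a proof --- as you concede in your final paragraph. The parts that do look solid, and constitute a genuine partial reduction: the Euclidean rigidity observation (every contribution $-p(F)^2$ and $m_e^2 x_e$ to a coefficient of $\cF$ in \eqref{eq:polyUF} is non-negative, so a coefficient vanishes only if each term does, and $\gamma$ is mass-momentum spanning iff all edges outside $\gamma$ are massless and each component of $(V,\gamma)$ satisfies the vector identity $\sum_{v}\widetilde p_v=0$), and your supermodularity argument (b): the indicator of the upward-closed family of mass-momentum spanning subgraphs can violate \eqref{eq:supermod} by at most $1$, this can only survive when the loop number is modular on the pair, and your rank computation (modularity forces the bipartite component-intersection graphs to be trees whose edges are the components of $\gamma\cap\delta$) plus leaf-by-leaf propagation of balance appears to close that case correctly.

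The genuine gap is in your part (a), and it is exactly the ``technical challenge'' the authors point to. Matching the minimal $\gamma$-degree of $\cF$ with $z_\cF(\gamma)$ only yields $\newt[\cF]\subseteq\mathbf P[z_\cF]$, which the paper already states separately; the hard content of the conjecture is the reverse inclusion $\mathbf P[z_\cF]\subseteq\newt[\cF]$, i.e.\ that enough monomials of $\cF$ survive the exceptional degeneration to realize every vertex (equivalently every face) of the base polytope. Your greedy, order-by-order realization via Brown's factorization is only sketched: you would need to show (i) that at an exceptional Euclidean point the leading terms never collapse below the prediction of $z_\cF$ --- in particular that $\cF_\gamma\cdot\,\cU_{G/\gamma}$ does not vanish identically when $\gamma$ is mass-momentum spanning and $\cF_G\not\equiv 0$ --- and (ii) that the face of $\mathbf P[z_\cF]$ in direction $\gamma$ really decomposes as $\newt[\cU_\gamma]+\newt[\cF_{G/\gamma}]$ (resp.\ $\newt[\cF_\gamma]+\newt[\cU_{G/\gamma}]$) with the induced functions on the subgraph $\gamma$ and the quotient $G/\gamma$ agreeing with the $z_\cF$-data of those smaller graphs equipped with their induced (again exceptional) kinematics, so that the induction actually closes. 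You name both points but resolve neither, and the second is subtler than your text suggests, since mass-momentum spanning ``inside $\gamma$'' and ``inside $G$'' are a priori different notions and the degenerate kinematics of $\gamma$ and $G/\gamma$ must be shown to inherit the rigidity structure compatibly. So your proposal is a reasonable and partially substantiated strategy --- essentially the Schultka/Brown route the paper alludes to --- but it does not establish Conjecture~\ref{conj:exceptEuclid}; it reduces it to the unproven face-factorization statements (i) and (ii).
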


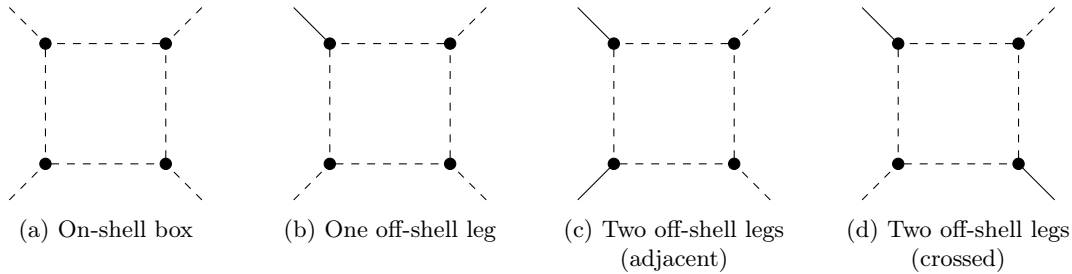
\begin{figure*}%
\captionsetup[subfigure]{justification=centering}
\newcommand{\xs}{.8}%
    \begin{subfigure}[t]{0.25\textwidth}
        \centering
         \begin{tikzpicture}[baseline=-\the\dimexpr\fontdimen22\textfont2\relax] \begin{feynman} \vertex [dot] (v1) at ( \xs, \xs) {}; \vertex [dot] (v2) at (-\xs, \xs) {}; \vertex [dot] (v3) at (-\xs,-\xs) {}; \vertex [dot] (v4) at ( \xs,-\xs) {}; \vertex (i1) at ( {1.6*\xs}, {1.6*\xs}); \vertex (i2) at (-{1.6*\xs}, {1.6*\xs}); \vertex (i3) at (-{1.6*\xs},-{1.6*\xs}); \vertex (i4) at ( {1.6*\xs},-{1.6*\xs}); \diagram*{ (v1)[dot]--[scalar](v2)--[scalar](v3)--[scalar](v4)--[scalar](v1), (i1) -- [scalar](v1), (i2) -- [scalar](v2),(i3) -- [scalar](v3), (i4) -- [scalar] (v4); }; \end{feynman} \end{tikzpicture}
        \caption{On-shell box}
        \label{fig:onshell1}
    \end{subfigure}%
    \begin{subfigure}[t]{0.25\textwidth}
        \centering
         \begin{tikzpicture}[baseline=-\the\dimexpr\fontdimen22\textfont2\relax] \begin{feynman} \vertex [dot] (v1) at ( \xs, \xs) {}; \vertex [dot] (v2) at (-\xs, \xs) {}; \vertex [dot] (v3) at (-\xs,-\xs) {}; \vertex [dot] (v4) at ( \xs,-\xs) {}; \vertex (i1) at ( {1.6*\xs}, {1.6*\xs}); \vertex (i2) at (-{1.6*\xs}, {1.6*\xs}); \vertex (i3) at (-{1.6*\xs},-{1.6*\xs}); \vertex (i4) at ( {1.6*\xs},-{1.6*\xs}); \diagram*{ (v1)[dot]--[scalar](v2)--[scalar](v3)--[scalar](v4)--[scalar](v1), (i1) -- [scalar](v1), (i2) -- (v2),(i3) -- [scalar](v3), (i4) -- [scalar] (v4); }; \end{feynman} \end{tikzpicture}
        \caption{One off-shell leg}
        \label{fig:onshell2}
    \end{subfigure}%
    \begin{subfigure}[t]{0.25\textwidth}
        \centering
        \begin{tikzpicture}[baseline=-\the\dimexpr\fontdimen22\textfont2\relax] \begin{feynman} \vertex [dot] (v1) at ( \xs, \xs) {}; \vertex [dot] (v2) at (-\xs, \xs) {}; \vertex [dot] (v3) at (-\xs,-\xs) {}; \vertex [dot] (v4) at ( \xs,-\xs) {}; \vertex (i1) at ( {1.6*\xs}, {1.6*\xs}); \vertex (i2) at (-{1.6*\xs}, {1.6*\xs}); \vertex (i3) at (-{1.6*\xs},-{1.6*\xs}); \vertex (i4) at ( {1.6*\xs},-{1.6*\xs}); \diagram*{ (v1)[dot]--[scalar](v2)--[scalar](v3)--[scalar](v4)--[scalar](v1), (i1) -- [scalar](v1), (i2) -- (v2),(i3) -- (v3), (i4) -- [scalar] (v4); }; \end{feynman} \end{tikzpicture}
        \caption{Two off-shell legs \\(adjacent)}
        \label{fig:onshell3}
    \end{subfigure}%
    \begin{subfigure}[t]{0.25\textwidth}
        \centering
         \begin{tikzpicture}[baseline=-\the\dimexpr\fontdimen22\textfont2\relax] \begin{feynman} \vertex [dot] (v1) at ( \xs, \xs) {}; \vertex [dot] (v2) at (-\xs, \xs) {}; \vertex [dot] (v3) at (-\xs,-\xs) {}; \vertex [dot] (v4) at ( \xs,-\xs) {}; \vertex (i1) at ( {1.6*\xs}, {1.6*\xs}); \vertex (i2) at (-{1.6*\xs}, {1.6*\xs}); \vertex (i3) at (-{1.6*\xs},-{1.6*\xs}); \vertex (i4) at ( {1.6*\xs},-{1.6*\xs}); \diagram*{ (v1)[dot]--[scalar](v2)--[scalar](v3)--[scalar](v4)--[scalar](v1), (i1) -- [scalar](v1), (i2) -- (v2),(i3) -- [scalar](v3), (i4) -- (v4); }; \end{feynman} \end{tikzpicture}
        \caption{Two off-shell legs \\(crossed)}
        \label{fig:onshell4}
    \end{subfigure}
    \caption{Massless box with different external legs on- or off-shell. On-shell ($p^2=0$) legs are drawn as dashed lines and off-shell ($p^2\neq 0$) legs with solid lines. Internal propagators are massless.}
\label{fig:boxes tikz-feynman dashed}
\end{figure*}

We emphasize that $\newt[\cF]$ is generally \emph{not} a generalized permutahedron outside of the Euclidean regime. This was observed in \cite[Remark~4.16]{Schultka:2018nrs} (see also \cite[Sec.~4.2]{Smirnov:2012gma}, \cite[Sec.~2.2.3]{Panzer:2015ida} or \cite[Remark 35]{Borinsky:2020rqs}).
Explicit counter examples are encountered while computing the massless on-shell boxes depicted in Figure~\ref{fig:boxes tikz-feynman dashed}.
The $\cF$ polynomials of the completely massless box with only on-shell external momenta, the massless box with one off-shell momentum and the massless box with two adjacent off-shell momenta (depicted in Figures~\ref{fig:onshell1}, \ref{fig:onshell2} and \ref{fig:onshell3}) do not fulfill the generalized permutahedron property.
On the other hand, the $\cF$ polynomial does fulfill the generalized permutahedron property for the massless box with two or more off-shell legs such that two off-shell legs are on opposite sides (as depicted in Figure~\ref{fig:onshell4}).

Therefore, we have to make concessions in the Minkowski regime with exceptional kinematics. 

An observation of Arkani-Hamed, Hillman, Mizera is helpful (see 
\cite[eq.~(8)]{Arkani-Hamed:2022cqe} and the discussion around it): the facet presentation of $\newt[\cF]$ given in Theorem~\ref{thm:FGP} turns out to hold in a quite broad range of kinematic regimes, even if $\newt[\cF]$ is not a generalized permutahedron. 
\begin{observation}
\label{obs:GPP}
The Newton polytope of $\cF$ is often equal to the base polytope $\mathbf P[z_\cF]$ with the function $z_\cF$ defined as in Theorem~\ref{thm:FGP}.  %
\end{observation}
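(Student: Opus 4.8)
The plan is to prove, under an appropriate no-cancellation hypothesis on the kinematics, the set equality $\newt[\cF]=\mathbf P[z_\cF]$ with $z_\cF$ defined as in Theorem~\ref{thm:FGP}. In the generic case this is already Theorem~\ref{thm:generic}, so the real target is the exceptional locus, and the whole question is how large a portion of it can be covered. I would split the equality into its two inclusions and handle them separately.

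The inclusion $\newt[\cF]\subseteq\mathbf P[z_\cF]$ should hold with essentially no hypothesis beyond $\cF\not\equiv 0$. By the combinatorial formula \eqref{eq:polyUF}, every monomial of $\cF$ is $\prod_{e\notin F}x_e$ for a spanning two-forest $F$, or $x_{e'}\prod_{e\notin T}x_e$ for a spanning tree $T$ and an edge $e'$; for the exponent vector $a$ of such a monomial and any $\gamma\subsetneq E$ one checks, exactly as in the proofs of Theorems~\ref{thm:UgenP} and \ref{thm:FGP}, that $\sum_{e\in\gamma}a_e\ge L_\gamma$, the bound improving to $L_\gamma+1$ precisely when $\gamma$ is mass-momentum spanning (using $\cF_{G/\gamma}=0$). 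Since moving to exceptional kinematics can only delete monomials, this inclusion is in fact easier, not harder, there.

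The content of the Observation is the reverse inclusion $\mathbf P[z_\cF]\subseteq\newt[\cF]$: every vertex of $\mathbf P[z_\cF]$ must occur as the exponent of a monomial of $\cF$ with nonvanishing coefficient. I would argue facet by facet and by induction on $|E|$. For a proper subgraph $\gamma\subsetneq E$, the monomials of $\cF$ that minimise $\sum_{e\in\gamma}a_e$ form an initial form $\cF^{(\gamma)}$ whose Newton polytope is the corresponding face of $\newt[\cF]$, and Brown's infrared factorisation \cite[Thm.~2.7]{Brown:2015fyf}---the same input that powers Theorem~\ref{thm:FGP}---identifies $\cF^{(\gamma)}$, up to an explicit monomial prefactor, with $\cU_\gamma\,\cF_{G/\gamma}$ when $\gamma$ is mass-momentum spanning and with $\cF_\gamma\,\cU_{G/\gamma}$ otherwise. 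These factors are Symanzik polynomials of the smaller graphs $\gamma$ and $G/\gamma$, whose Newton polytopes are controlled by Theorems~\ref{thm:UgenP}, \ref{thm:generic} and the inductive hypothesis; matching their $z$-functions against $z_\cF$ restricted to that face then recovers all vertices of $\mathbf P[z_\cF]$ on the facet. A shortcut for many exceptional configurations: the set of surviving monomials of $\cF$ depends on the kinematics only through which partial sums $\sum_{u,v\in V'}\mathcal P^{u,v}$, and their combinations with the $m_e^2$, vanish---not through the regime---so one may replace $\mathcal P$ by a negative semi-definite $\mathcal P'$ producing exactly the same monomials and invoke Conjecture~\ref{conj:exceptEuclid}; this applies whenever such a $\mathcal P'$ exists, e.g.\ when no external leg is forced on shell.

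The main obstacle is exactly what the word ``often'' conceals. At exceptional points the coefficient attached to a vertex of $\mathbf P[z_\cF]$---a momentum square $-p(F)^2$, possibly combined with masses $m_e^2$---can vanish or cancel, and then $\newt[\cF]$ strictly shrinks and the equality fails; the massless on-shell boxes of Figure~\ref{fig:boxes tikz-feynman dashed} realise precisely this phenomenon. A clean theorem therefore needs a precise condition ruling out such accidental cancellations while still covering the bulk of the Minkowski regime, and stating it uniformly over all graphs---that is, controlling which cut momenta happen to be lightlike or collinear at a given kinematic point---seems to be the genuinely hard part, which is why the statement is recorded here as an observation rather than a theorem.
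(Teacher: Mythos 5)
Be aware of what the paper actually does with this statement: it offers no proof at all. The Observation is recorded as an empirical fact (following the remark of Arkani-Hamed--Hillman--Mizera), supported by checking the massless boxes of Figure~\ref{fig:boxes tikz-feynman dashed} and explicitly \emph{bounded away from a theorem} by the triangle counterexample of Figure~\ref{fig:counterexampletriangle}; the only part promoted to a theorem is the one-sided inclusion $\newt[\cF]\subset\mathbf P[z_\cF]$, justified exactly as you justify it (passing to exceptional kinematics can only delete monomials). So your first inclusion reproduces what the paper proves, and your closing admission --- that the real difficulty is formulating a precise no-cancellation condition --- is in fact the paper's own position: they state that finding such conditions is open.

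For the reverse inclusion, however, your sketch has a genuine gap, and it sits precisely where the Observation can fail. The facet-by-facet induction leans on Brown's infrared factorisation of the initial forms of $\cF$ into $\cU_\gamma\,\cF_{G/\gamma}$ or $\cF_\gamma\,\cU_{G/\gamma}$, but that identification is established under Euclidean/generic hypotheses, and at exceptional kinematics it is exactly these products whose coefficients can cancel: in the triangle of Figure~\ref{fig:counterexampletriangle} the monomials containing $x_0$ drop out of $\cF$ because the on-shell condition $p^2=m^2$ cancels mass against momentum terms, so the face of $\newt[\cF]$ attached to the corresponding $\gamma$ is strictly smaller than what the factorisation formula predicts. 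Hence the induction cannot close without assuming the very hypothesis you leave unformulated, and the argument proves nothing beyond the generic case already covered by Theorem~\ref{thm:generic}. Your proposed shortcut is likewise unavailable in the most relevant exceptional situations: a negative semi-definite $\mathcal P'$ with $\mathcal P'^{v,v}=0$ forces the whole row $\mathcal P'^{u,v}$ to vanish, so on-shell massless legs (the box examples) can never be mimicked by Euclidean data, and Conjecture~\ref{conj:exceptEuclid} --- itself unproven in the paper --- cannot be invoked there. In short, your proposal is a reasonable research outline toward a precise theorem, but as a proof of the statement it establishes only the containment the paper already states, not the equality that the word ``often'' gestures at.
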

This is significant since \ft uses the polytope $\mathbf P[z_\cF]$ internally as a substitute for $\newt[\cF]$ as the former is easier to handle and faster to compute than the latter.

For instance, all massless boxes depicted in Figure~\ref{fig:boxes tikz-feynman dashed} have the property that the Newton polytopes of their $\cF$ polynomial are base polytopes described by the respective $z_\cF$ functions, i.e.~$\newt[\cF] = \mathbf P[z_\cF]$. 
In the first three cases (Figures~\ref{fig:onshell1}, \ref{fig:onshell2}, \ref{fig:onshell3}) the $z_\cF$ function does not fulfill the inequalities \eqref{eq:supermod}. For the graph in Figure~\ref{fig:onshell4} these inequalities are fulfilled and the associated Newton polytope $\newt[\cF] = \mathbf P[z_\cF]$ is a generalized permutahedron.

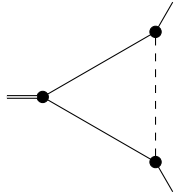
\begin{figure}
\newcommand{\xs}{1}%
        \centering
         \begin{tikzpicture}[baseline=-\the\dimexpr\fontdimen22\textfont2\relax] \begin{feynman} \vertex [dot] (v1) at ( -\xs, 0) {}; \vertex [dot] (v2) at ( {.5*\xs}, {-0.86602540378*\xs}) {}; \vertex [dot] (v3) at ( {.5*\xs}, {0.86602540378*\xs}) {}; \vertex (i1) at ( {-1.6*\xs}, 0) {}; \vertex (i2) at ( {1.6*\xs/2}, {-1.6*\xs*0.86602540378}) {}; \vertex (i3) at ( {1.6*\xs/2}, { 1.6*\xs*0.86602540378}) {}; \diagram*{ (v1)[dot]--(v2)--[scalar](v3)--(v1), (i1) -- [double](v1), (i2) -- (v2),(i3) -- (v3); }; \end{feynman} \end{tikzpicture}
    \caption{Triangle Feynman graph relevant in QED. The two solid propagators have mass $m$ and the solid legs have incoming squared momentum $m^2$. The dashed propagator is massless and the doubled leg has incoming squared momentum $Q^2$.}
    \label{fig:counterexampletriangle}
\end{figure}

It would be very beneficial to have precise conditions for when $\mathbf P[z_\cF]$ indeed is equal to $\newt[\cF]$, we leave this for a future project. Empirically, we have observed that it is valid for quite a wide range of exceptional kinematics. We know, however, that this condition is not fulfilled for arbitrary exceptional kinematics \cite{aaron_email}.
An explicit counter example\footnote{We thank Erik Panzer for sharing this (counter) example with us.} is depicted in Figure~\ref{fig:counterexampletriangle}. For this triangle graph with the indicated exceptional kinematic configuration, the polytope $\newt[\cF]$ is different from $\mathbf P[z_\cF]$. We find that $\cF(\xx) = m^2 (x_1^2 + x_2^2) + (2 m^2 - Q^2 ) x_1 x_2 $ which implies that $\newt[\cF]$ is a one-dimensional polytope. On the other hand, $\mathbf P[z_\cF]$ can be shown to be a two-dimensional polytope. In $D=4$, the Feynman integral associated to Figure~\ref{fig:counterexampletriangle} is infrared divergent and therefore not quasi-finite. In $D=6$, \ft can evaluate the integral without problems. Nonetheless, we expect there to be more complicated Feynman graphs with similarly exceptional external kinematics, that are quasi-finite, but which cannot be evaluated using \soft{feyntrop}. We did not, however, manage to find such a graph.

Even if $\newt[\cF] \neq \mathbf P[z_\cF]$, the Newton polytope $\newt[\cF]$ is \emph{bounded} by the base polytope $\mathbf P[z_\cF]$.
The reason for this is that $\cF$ can only lose monomials if we make the kinematics less generic.
\begin{theorem}
We have $\newt[\cF] \subset \mathbf P[z_\cF]$.
\end{theorem}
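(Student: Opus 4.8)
The plan is to prove the stronger statement that every exponent vector of a monomial of $\cF$ (with nonzero coefficient) already lies in $\mathbf P[z_\cF]$; since $\mathbf P[z_\cF]$ is convex and $\newt[\cF]$ is by definition the convex hull of these exponent vectors, this at once yields $\newt[\cF]\subset\mathbf P[z_\cF]$. So I would fix a monomial $\xx^{\mathbf k}$ of $\cF$ and check the defining relations of the base polytope: $\sum_{e\in E}k_e=z_\cF(E)$ and $\sum_{e\in\gamma}k_e\ge z_\cF(\gamma)$ for each $\gamma\subsetneq E$. The equality is immediate, since by \eqref{eq:polyUF} the polynomial $\cF$ is homogeneous of degree $L+1$, while $G/E$ has no edges so $\cF_{G/E}=0$, meaning $E$ is mass-momentum spanning and $z_\cF(E)=L_E+1=L+1$.

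Next I would establish the weak bound $\sum_{e\in\gamma}k_e\ge L_\gamma$ for \emph{every} monomial of $\cF$ and every $\gamma\subset E$, with no hypothesis on the kinematics. By \eqref{eq:polyUF}, such a monomial is either $\prod_{e\notin F}x_e$ for a spanning two-forest $F$, and then $\sum_{e\in\gamma}k_e=|\gamma\setminus F|\ge|\gamma|-(|\gamma|-L_\gamma)=L_\gamma$ because a forest can contain at most $|\gamma|-L_\gamma$ edges of $\gamma$; or else it is a monomial of $\cU\cdot\big(\sum_{e\in E}m_e^2x_e\big)$, and then Theorem~\ref{thm:UgenP} (which gives $\newt[\cU]=\mathbf P[z_\cU]$ with $z_\cU(\gamma)=L_\gamma$) forces the $\cU$-factor to contribute at least $L_\gamma$, with the extra variable only adding to the count. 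This already settles every $\gamma\subsetneq E$ that is not mass-momentum spanning, where $z_\cF(\gamma)=L_\gamma$ by definition.

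The remaining case is a mass-momentum spanning $\gamma\subsetneq E$, where $z_\cF(\gamma)=L_\gamma+1$ and the bound must be improved by one. Here I would invoke a variant of Brown's infrared factorization \cite[Theorem~2.7]{Brown:2015fyf}: the part of $\cF$ consisting of all terms of minimal degree $L_\gamma$ in the variables $\{x_e:e\in\gamma\}$ equals exactly $\cU_\gamma(\xx_\gamma)\cdot\cF_{G/\gamma}(\xx_{E\setminus\gamma})$. This identity can be verified directly from \eqref{eq:polyUF}: on the momentum side, the two-forests of $G$ meeting $\gamma$ in the maximal $|\gamma|-L_\gamma$ edges split into a spanning forest of $\gamma$ together with a two-forest of $G/\gamma$ carrying the same momentum flow, while on the mass side only edges outside $\gamma$ contribute at this degree, through the leading $\xx_\gamma$-part $\cU_\gamma\cU_{G/\gamma}$ of $\cU$. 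Since $\gamma$ is mass-momentum spanning, $\cF_{G/\gamma}=0$, so the degree-$L_\gamma$ part of $\cF$ vanishes; combined with the weak bound of the previous paragraph, every surviving monomial of $\cF$ then satisfies $\sum_{e\in\gamma}k_e\ge L_\gamma+1=z_\cF(\gamma)$, which completes the argument.

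I expect the only genuinely delicate point to be this factorization step: one has to make sure it is read as an identity between polynomials in the edge variables whose coefficients are at most linear in the kinematic data $\{m_e^2,\mathcal P^{u,v}\}$, so that its validity for generic kinematics propagates to every specialization — in particular to the exceptional, on-shell configurations that make the theorem nontrivial — rather than invoking \cite{Brown:2015fyf} only in the Euclidean regime in which it was phrased. Apart from this, only routine bookkeeping is needed for disconnected subgraphs $\gamma$ and tadpole edges when forming $\cU_\gamma$ and $G/\gamma$, handled by the usual conventions (maximal spanning forests, contraction of self-loops).
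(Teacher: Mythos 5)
Your argument is correct, but it takes a genuinely different route from the paper. The paper obtains the inclusion as an immediate corollary of Theorem~\ref{thm:generic}: at generic kinematics $\newt[\cF]=\mathbf P[z_\cF]$, and specializing the kinematic data can only cancel monomials of \eqref{eq:polyUF}, never create new ones, so the support and hence the Newton polytope can only shrink. You instead verify the facet description of the base polytope monomial by monomial: homogeneity gives the total-degree equality, the bound $\sum_{e\in\gamma}k_e\ge L_\gamma$ follows for the momentum part from the fact that a forest contains at most $|\gamma|-L_\gamma$ edges of $\gamma$ and for the mass part from Theorem~\ref{thm:UgenP}, and the factorization of the degree-$L_\gamma$ part of $\cF$ in the $\xx_\gamma$ variables as $\cU_\gamma\,\cF_{G/\gamma}$ upgrades this to $L_\gamma+1$ precisely when $\gamma$ is mass-momentum spanning. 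What your route buys: it is self-contained (it essentially reproves the containment half of Theorem~\ref{thm:FGP} rather than citing it), and it is more careful on a point the paper's one-line justification glosses over. Since mass-momentum spanning is decided by $\cF_{G/\gamma}=0$ at the kinematic point actually used, a subgraph can become mass-momentum spanning only at an exceptional point; then $z_\cF$ increases and $\mathbf P[z_\cF]$ shrinks relative to its generic counterpart, so the ``support can only shrink'' argument by itself only yields containment in the larger, generic base polytope, whereas your vanishing of the degree-$L_\gamma$ part at the given kinematics delivers the inclusion in the polytope the theorem actually refers to. The only delicate step is the one you flag yourself: the factorization must hold as an identity whose coefficients are linear in $\{m_e^2,\mathcal P^{u,v}\}$ so that it survives every specialization, with the usual conventions for disconnected $\gamma$ and tadpoles in $G/\gamma$; your bijection sketch between two-forests of $G$ meeting $\gamma$ maximally and pairs of spanning forests of $\gamma$ with two-forests of $G/\gamma$ is exactly what is needed there.
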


\paragraph{Efficient check of the generalized permutahedron property of a base polytope}

Naively, it is quite hard to check if the base polytope $\mathbf{P}[z]$ associated to a given function $z:\mathbf{2}^{E} \rightarrow \RR$ is a generalized permutahedron. There are of the order $2^{2|E|}$ many inequalities to be checked for \eqref{eq:supermod}. 
A more efficient way is to only check the following inequalities
\begin{align} \label{eq:supermod_easy} z(\gamma \cup \{ e \} ) + z(\gamma \cup \{ h \} ) \leq z(\gamma ) + z(\gamma \cup \{ e,h \} ) \end{align}
for all subgraphs $\gamma \subset E$ and edges $e,h \in E\setminus \gamma$.
The inequalities \eqref{eq:supermod_easy} imply the ones in \eqref{eq:supermod}.
For \eqref{eq:supermod_easy} less than $|E|^2 2^{|E|}$ inequalities need to be checked. 
So, \eqref{eq:supermod_easy} is a more efficient version of \eqref{eq:supermod}.

\subsection{Generalized permutahedral tropical sampling}
\label{sec:gentropsampling}

\ft uses a slightly adapted version of the generalized permutahedron tropical sampling algorithm from \cite[Sec.~6.1 and Sec.~7.2]{Borinsky:2020rqs} to sample from the distribution given by $\mu^\tr$ in eq.~\eqref{eq:mutr}.

The algorithm involves a preprocessing and a sampling step.
\paragraph{Preprocessing}
The first algorithmic task to prepare for the sampling from $\mu^\tr$ is to check in which regime the kinematic data are located.
The kinematic data are provided via the matrix $\mathcal P^{u,v}$ as 
it was defined in Section~\ref{sec:def} and via a list of masses $m_e$ for each edge $e\in E$.
If the symmetric $|V|\times |V|$ matrix $\mathcal P^{u,v}$ is negative semi-definite (which is easy to check using matrix diagonalization), then
we are in the Euclidean regime.
Similarly we check if the defining (in)equalities for the other kinematic regimes given in Section~\ref{sec:kinematic_regimes} are fulfilled or not. Depending on the kinematic regime, we need to use a contour deformation for the integration or not. Further, if the kinematics are Euclidean or generic, we know that the generalized permutahedron property of $\cF$ is fulfilled (also thanks to the unproven Conjecture~\ref{conj:exceptEuclid}).
Table~\ref{tab:regime} summarizes this dependence of the algorithm on the kinematic regime.

If we find that we are at an exceptional and non-Euclidean kinematic point, $\mathbf N[\cF]$ might not be a generalized permutahedron 
and it might not even be equal to $\mathbf P[z_\cF]$. In this case, the program prints a message warning the user that the integration might not work. The program then continues under the assumption that $\mathbf N[\cF] = \mathbf P[z_\cF]$.
In any other case, $\newt[\cF]$ is a generalized permutahedron and equal to $\mathbf P[z_\cF]$. Hence, the tropical sampling algorithm is guaranteed to give a convergent Monte Carlo integration method by \cite[Sec.~6.1]{Borinsky:2020rqs}.

The next task is to compute the loop number $L_\gamma$ and check if $\gamma$ is mass-momentum spanning (by asking if $\cF_{G/\gamma} = 0$) for each subgraph $\gamma \subset E$. Using these data, we can compute the values of $z_\cU(\gamma)$ and $z_\cF(\gamma)$ for all subgraphs $\gamma \subset E$
using the respective formulas from Theorems~\ref{thm:UgenP} and \ref{thm:FGP}.

If we are at an exceptional and non-Euclidean kinematic point, we check the inequalities \eqref{eq:supermod_easy} for the $z_\cF$ function. If they are all fulfilled, then $\mathbf P[z_\cF]$ is a generalized permutahedron and we get further indication that the tropical integration step will be successful. The program prints a corresponding message in this case.
Also assuming that Assumption~\ref{ass:approx} is fulfilled, we can compute all integrals in \eqref{eq:deform_eps_trop} efficiently.

Note that even in the pseudo-Euclidean and the Minkowski regimes with exceptional kinematics, the integration is often successful. 
For instance, we can integrate all Feynman graphs depicted in Figure~\ref{fig:boxes tikz-feynman dashed} regardless of the fulfillment of the generalized permutahedron property. In fact, we did not find a quasi-finite example where the algorithm fails (even though the convergence rate is quite bad for examples in highly exceptional kinematic regimes). %
We emphasize, however, that the user should check the convergence of the result separately when integrating at a manifestly exceptional and non-Euclidean kinematic point. For instance, by running the program repeatedly with different numbers of sample points or by slightly perturbing the kinematic point. Recall that for generic kinematics $\newt[\cF]$ is always a generalized permutahedron by Theorem~\ref{thm:generic} and the integration is guaranteed to work if the finiteness assumptions are fulfilled.

\begin{table}[t]
\centering
 \begin{tabular}{l|l|l|l} 
             & Euclidean                 & Pseudo-Euclidean        & Minkowski  \\ [1ex] 
 \hline
 Generic     & no def. / always GP & no def. / always GP     & def. / always GP  \\ \hline
 Exceptional & no def. / always GP & no def. / not always GP & def. / not always GP  \\ [1ex] 
 \end{tabular}
\caption{Table of the necessity of a deformation (def.) and the fulfillment of the generalized permutahedron property of $\cF$ (GP) in each kinematic regime.}
\label{tab:regime}
\end{table}

The next computational step is to compute the 
\emph{generalized degree of divergence} (see \cite[Sec.~7.2]{Borinsky:2020rqs})
for each subgraph $\gamma \subset E$.
It is defined by 
\begin{align} \omega(\gamma) = \sum_{e \in \gamma} \nu_e - D L_\gamma/2 - \omega \delta^{\mathrm{m.m.}}_{\gamma}, \end{align}
where $L_\gamma$ is the loop number of the subgraph $\gamma$ and $\delta^{\mathrm{m.m.}}_{\gamma} = 1$ if $\gamma$ is mass-momentum spanning and $0$ otherwise.
The prefactor $\omega$ of $\delta^{\mathrm{m.m.}}_{\gamma}$ is the usual superficial degree of divergence of the overall graph $G$ as it was defined in Section~\ref{sec:def}, $\omega = \sum_{e\in E} \nu_e - DL/2$.

If $\omega(\gamma) \leq 0$ for any proper subgraph $\gamma$, then 
we discovered a subdivergence. This means that all integrals \eqref{eq:deform_eps} are divergent. Tropical sampling is not possible in this case and the program prints an error message and terminates.
An additional analytic continuation step from \eqref{eq:deform_eps} to a
set of quasi-finite integrals (see~Section~\ref{sec:dimreg}) would resolve this problem. Translating a divergent integral into a linear combination of quasi-finite integrals is always possible, but we will leave the implementation of this step into \ft to a future research project.

If  we have $\omega(\gamma) > 0$ for all $\gamma\subset E$, we
can proceed to the key preparatory step for generalized permutahedral tropical sampling:
We use $\omega(\gamma)$ to compute the following auxiliary subgraph function $J(\gamma)$, which is \emph{recursively} defined by
setting $J(\emptyset) = 1$, agreeing that $\omega(\emptyset)=1$ and
\begin{align} \label{eq:Jrecursion} J(\gamma) = \sum_{e \in \gamma} \frac{J(\gamma \setminus e)}{\omega(\gamma \setminus e)} \text{ for all } \gamma \subset E \, , \end{align}
where $\gamma \setminus e$ is the subgraph $\gamma$ with the edge $e$ removed.
The terminal element of this recursion is the subgraph that contains all edges $E$ of $G$. We find that $J(E) = I^\tr$, where $I^\tr$ is the normalization factor in \eqref{eq:mutr} and \eqref{eq:deform_eps_trop} (see \cite[Proposition~29]{Borinsky:2020rqs} for a proof and details).

In the end of the preprocessing step we compile a table with the information $L_\gamma, \delta^{\mathrm{m.m.}}_\gamma, \omega(\gamma)$ and $J(\gamma)$ for each subgraph $\gamma \subset E$ and store it in the memory of the computer.

\paragraph{Sampling step}
The sampling step of the algorithm repeats the following simple algorithm 
to generate samples $\xx \in \PP^E_+$ that are distributed according to the probability density \eqref{eq:mutr}. 
It is completely described in Algorithm~\ref{alg:gp_sampling}. 
The runtime of our implementation of the  algorithm grows roughly quadratically with $|E|$, but a linear runtime is achievable. %
The validity of the algorithm was proven in a more general setup in \cite[Proposition~31]{Borinsky:2020rqs}.
The additional computation of the values of $\cU^\tr(\xx)$ and $\cV^\tr(\xx)$ is an
application of an optimization algorithm by Fujishige and Tomizawa \cite{fujishige1983note} 
(see also \cite[Lemma~26]{Borinsky:2020rqs}).

The key step of the sampling algorithm is to interpret the recursion \eqref{eq:Jrecursion} as a 
probability distribution for a given subgraph over its edges. That means, for a given 
$\gamma \subset E$ we define $p_e^\gamma = \frac{1}{J(\gamma)} \frac{J(\gamma\setminus e)}{\omega(\gamma\setminus e)}$.
Obviously, $p_e^\gamma \geq 0$ and by \eqref{eq:Jrecursion} we have 
$\sum_{e\in \gamma} p_e^\gamma = 1$.
So, for each $\gamma \subset E$, $p^\gamma_e$ gives a proper probability 
distribution on the edges of the subgraph $\gamma$.

\begin{algorithm}[H]
\begin{algorithmic}[0]
\State{Initialize the variables $\gamma = E$ and $\kappa,U = 1$.}
\While{$\gamma \neq \emptyset$}
\State{Pick a random edge $e \in \gamma$ with probability $p_e^\gamma = \frac{1}{J(\gamma)} \frac{J(\gamma\setminus e)}{\omega(\gamma\setminus e)}$.}
\State{Set $x_e = \kappa$.}
\State{If $\gamma$ is mass-momentum spanning but $\gamma\setminus e$ is not, set $V = x_e$.}
\State{If $L_{\gamma\setminus e} < L_\gamma$, multiply $U$ with $x_e$ and store the result in $U$, i.e.~set $U \gets x_e \cdot U $.}
\State{Remove the edge $e$ from $\gamma$, i.e.~set $\gamma \gets \gamma \setminus e$.}
\State{Pick a uniformly distributed random number $\xi \in [0,1]$.}
\State{Multiply $\kappa$ with $\xi^{1/\omega(\gamma)}$ and store the result in $\kappa$, i.e.~set $\kappa \gets \kappa \xi^{1/\omega(\gamma)}$.}
\EndWhile
\State{Return $\xx = [x_0,\ldots,x_{|E|-1}] \in \PP^E_+$, $\cU^\tr(\xx) = U$ and $\cV^\tr(\xx) = V$.}
\end{algorithmic}
\caption{Generating a sample distributed as $\mu^\tr$ from \eqref{eq:mutr}}
\label{alg:gp_sampling}
\end{algorithm}

The algorithm can also be interpreted as iteratively \emph{cutting} edges of the graph $G$:
We start with $\gamma = E$ and pick a random edge with probability 
$p_e^\gamma$. This edge is \emph{cut} and removed from $\gamma$.
We continue with the newly obtained graph and repeat this cutting process until all edges are removed. In the course of this, Algorithm~\ref{alg:gp_sampling} computes appropriate random values for the coordinates $\xx \in \PP_+^E$.

\section{Numerical integration}

\subsection{Monte Carlo integration}
We now have all the necessary tools at hand to evaluate the 
integrals in  \eqref{eq:deform_eps_trop} using Monte Carlo integration.
In this section, we briefly review this procedure.
The integrals in \eqref{eq:deform_eps_trop} are of the form 
\begin{align} I_f = \int_{\PP_+^E} f(\xx) \mu^\tr \, , \end{align}
where, thanks to the tropical approximation property, $f(\xx)$ is a function that is at most $\log$-singular inside, or on the boundary of, $\PP_+^E$.
To evaluate such an integral, we first use the tropical sampling Algorithm~\ref{alg:gp_sampling} to randomly sample $N$ points $\xx^{(1)}, \ldots, \xx^{(N)} \in \PP_+^E$ that are distributed according to the tropical probability measure $\mu^\tr$.
By the central limit theorem and as $f(\xx)$ is square-integrable,
\begin{align} &I_f \approx I_f^{(N)} & \text{ where } && I_f^{(N)} &= \frac{1}{N} \sum_{i=1}^N f(\xx^{(i)}) \, . \intertext{ For sufficiently large $N$, the expected error of this approximation of the integral $I_f$ is } &\sigma_f = \sqrt{ \frac{ I_{f^2} - I_f^2}{N} } \label{sigma_f} & \text{ where }&& I_{f^2}& = \int_{\PP_+^E} f(\xx)^2 \mu^\tr \, , \intertext{ which itself can be estimated (as long as $f(\xx)^2$ is square-integrable) by } &\sigma_f \approx \sigma_f^{(N)} & \text{ where } & & \sigma_f^{(N)} &= \sqrt{ \frac{1}{N-1} \left( I^{(N)}_{f^2} - \big(I_f^{(N)}\big)^2 \right) } \text{ and } I^{(N)}_{f^2} = \frac{1}{N} \sum_{i=1}^{N} f(\xx^{(i)})^2 \, . \end{align}
To evaluate the estimator $I^{(N)}_f$ and the expected error $\sigma^{(N)}_f$ it is necessary to 
evaluate $f(\xx)$ for $N$ different values of $\xx$. As the random points $\xx^{(1)}, \ldots, \xx^{(N)} \in \PP_+^E$ 
can be obtained quite quickly using  Algorithm~\ref{alg:gp_sampling}, this evaluation becomes a bottleneck. 
In the next section, we describe a fast method to perform this evaluation, which is implemented in \soft{feyntrop} to efficiently obtain Monte Carlo estimates and error terms for the integrals in \eqref{eq:deform_eps_trop}.

\subsection{Fast evaluation of (deformed) Feynman integrands}
\label{sec:evaluation}

To evaluate the integrals in \eqref{eq:deform_eps_trop} using a Monte Carlo 
approach we do not only have to be able to sample from the distribution $\mu^\tr$, but 
we also need to rapidly evaluate the remaining integrand (denoted as $f(\xx)$ in the last section).
Explicitly for the numerical evaluation of \eqref{eq:deform_eps_trop}, we have to be able to  compute $X_e = x_e \exp\big(-i \lambda \frac{\partial \mathcal V}{\partial x_e} (\xx) \big)$ 
as well as 
$\cU(\XX), \cV(\XX)$ and $\det \mathcal J_\lambda(\xx)$ for any  $\xx \in \PP_+^E$.

\paragraph{Evaluation of the $\cU$ and $\cF$ polynomials}

Surprisingly, the explicit polynomial expression for $\cU$ and $\cF$ from eq.~\eqref{eq:polyUF} are \emph{harder} to evaluate than the 
matrix and determinant expression \eqref{eq:laplaceUF} if the underlying graph exceeds a certain complexity. 
The reason for this is that the number of monomials in \eqref{eq:polyUF} increases exponentially with the loop number (see, e.g., \cite{MCKAY1983149} for the asymptotic growth rate of the number of spanning tress in a regular graph),
while the size of the matrices in \eqref{eq:laplaceUF} only increases linearly. Standard linear algebra algorithms as the Cholesky or LU decompositions \cite{horn2012matrix} provide polynomial time algorithms to compute the inverse and determinant of $\mathcal L(\xx)$ and therefore values of $\cU(\xx)$ and $\cF(\xx)$ (see, e.g., \cite[Sec.~7.1]{Borinsky:2020rqs}).
In fact, the linear algebra problems on \emph{graph Laplacian} matrices that need to be solved to compute  
$\cU(\xx)$ and $\cF(\xx)$ fall into a class of problems for which \emph{nearly linear runtime} algorithms are available
\cite{spielman2014nearly}.

\paragraph{Explicit formulas for the $\cV$ derivatives}
We need explicit formulas for the derivatives of $\cV$. These formulas provide fast evaluation methods for $\XX$ and the Jacobian $\mathcal J_\lambda(\xx)$. %

Consider the $(|V|-1) \times (|V|-1)$ matrix $\mathcal M(\xx) = \mathcal L^{-1}(\xx) \,\mathcal P \,\mathcal L^{-1}(\xx)$
with $\mathcal L(\xx)$ and $\mathcal P$ as defined in Section~\ref{sec:def}.
For edges $e$ and $h$ that connect the vertices $u_e,v_e$ and $u_h,v_h$ respectively, we define 
\begin{align} \begin{aligned} \label{eq:AB} \mathcal A(\xx)_{e,h} &= \frac{1}{x_e x_h} \left( \mathcal M(\xx)_{u_e,u_h} + \mathcal M(\xx)_{v_e,v_h} - \mathcal M(\xx)_{u_e,v_h} - \mathcal M(\xx)_{v_e,u_h} \right)\\
\mathcal B(\xx)_{e,h} &= \frac{1}{x_e x_h} \left( \mathcal L^{-1}(\xx)_{u_e,u_h} + \mathcal L^{-1}(\xx)_{v_e,v_h} - \mathcal L^{-1}(\xx)_{u_e,v_h} - \mathcal L^{-1}(\xx)_{v_e,u_h} \right), \end{aligned} \end{align}
where we agree that $\mathcal L^{-1}(\xx)_{u,v} = \mathcal M(\xx)_{u,v} = 0$ if any of $u$ or $v$ is equal to $v_0$, the arbitrary vertex that was removed in the initial expression of the Feynman integral \eqref{eq:prop}.
It follows from \eqref{eq:laplaceUF} and the matrix differentiation rule $\frac{\partial}{\partial x_e } \mathcal L^{-1}(\xx)_{u,v} = \left( -\mathcal L^{-1}(\xx) \frac{\partial \mathcal L}{\partial x_e } (\xx) \mathcal L^{-1}(\xx) \right)_{u,v}$ 
that 
\begin{align} \label{eq:diffV} \frac{\partial \cV}{\partial x_e} (\xx) &= -\mathcal A(\xx)_{e,e} + m_e^2 \, , &       \frac{\partial^2 \cV}{\partial x_e \partial x_h} (\xx) &=  2 \delta_{e,h} \frac{\mathcal A(\xx)_{e,e}}{x_e} - 2 (\mathcal A(\xx) \circ \mathcal B(\xx))_{e,h} \, , \end{align}
where we use the 
 \emph{Hadamard} or \emph{element-wise matrix product},
$(\mathcal A(\xx) \circ \mathcal B(\xx))_{e,h} = \mathcal A(\xx)_{e,h} \cdot \mathcal B(\xx)_{e,h}$.

\paragraph{Computation of the relevant factors in the integrands of \eqref{eq:deform_eps_trop}}
We summarize the necessary steps to compute all the factors in the deformed and $\epsilon$-expanded tropical Feynman integral representation \eqref{eq:deform_eps_trop}.
\begin{enumerate}
\item
Compute the graph Laplacian $\mathcal L(\xx)$ as defined in Section~\ref{sec:def}.
\item
Compute the inverse $\mathcal L^{-1}(\xx)$ (e.g.~by Cholesky decomposing $\mathcal L(\xx)$).
\item
Use this to evaluate the derivatives of $\cV(\xx)$ via 
the formulas in \eqref{eq:AB} and \eqref{eq:diffV}.
\item 
Compute the values of the deformed $\XX$ parameters: $X_e = x_e \exp\big(-i \lambda \frac{\partial \mathcal V}{\partial x_e} (\xx) \big)$.
\item
Compute the Jacobian $\mathcal J_\lambda(\xx)$ using the formula in \eqref{eq:Jlambda}.
\item Evaluate $\det \mathcal J_\lambda(\xx)$ (e.g.~by using a LU decomposition of $\mathcal J_\lambda(\xx)$).
\item Compute the deformed graph Laplacian $\mathcal L(\XX)$.
\item
Compute $\mathcal L^{-1}(\XX)$ and $\det \mathcal L(\XX)$ (e.g.~by using a LU decomposition of $\mathcal L(\XX)$ as a Cholesky decomposition is not possible, because $\mathcal L(\XX)$ is not a hermitian matrix in contrast to $\mathcal L(\xx)$).
\item 
Use the formulas \eqref{eq:laplaceUF}
to obtain values for $\cU(\XX)$, $\cF(\XX)$ and $\cV(\XX) = \cF(\XX)/\cU(\XX)$.
\end{enumerate}

The computation obviously simplifies if 
we set $\lambda =0$, in which case we have $\XX =\xx$. We are allowed to set $\lambda=0$ if we do not need the contour deformation. This is the case, for instance, in the Euclidean or the pseudo-Euclidean regimes. In our implementation we check if we are in these regimes and adjust the evaluation of the integrand accordingly.

\section{The program \ft}
\label{sec:program}
We have implemented the contour-deformed tropical integration algorithm, which we discussed in the previous sections, in a \soft{C++} module named \soft{feyntrop}.
This module is an upgrade to previous code developed by the first author in \cite{Borinsky:2020rqs}.

\ft was checked against \soft{AMFlow} \cite{Liu:2022chg} and \soft{pySecDec} \cite{Borowka:2017idc} for roughly 15 different diagrams with 1-3 loops and 2-5 legs at varying kinematics points, in both the Euclidean and Minkowski regimes, finding agreement in all cases within the given uncertainty bounds.
In the Euclidean regime, the original algorithm was checked against numerous analytic computations that were obtained at high loop order using  conformal four-point integral and graphical function techniques~\cite{Borinsky:2022lds}.

Note that our prefactor convention, which we fixed in eqs.~\eqref{eq:prop} and \eqref{eq:par}, differs from the one in \soft{AMFlow} and \soft{pySecDec} by a factor of $(-1)^{|\mathbf{\nu}|}$, 
where $|\mathbf{\nu}| = \sum_{e=0}^{|E|-1} \nu_e$.
In comparison to \soft{FIESTA} \cite{Smirnov:2021rhf}, our convention differs by a factor of $(-1)^{|\mathbf{\nu}|} \exp\left({-L \gamma_\text{E} \epsilon}\right)$.

\subsection{Installation}

The source code of \ft is available in the repository \href{https://github.com/michibo/feyntrop}{https://github.com/michibo/feyntrop}
on \soft{github}.
It can be downloaded and built by running the following sequence of commands
\begin{verbatim}
git clone https://github.com/michibo/feyntrop.git
cd feyntrop
make
\end{verbatim}
in a \soft{Linux} environment.
\ft is interfaced with \soft{python} \cite{python} via the library \soft{pybind11} \cite{pybind11}%
\footnote{Note added in proof: Due to compatibility issues on some hardware, we removed the dependency on \soft{pybind11} in a new version of \soft{feyntrop} that is available at \href{https://github.com/michibo/feyntrop}{https://github.com/michibo/feyntrop}.  This slightly updated version also includes a low-level command-line interface that does not require \soft{python} at all. This interface might be useful in a high-performance computing environment. It is described in the \texttt{README.md} file in the repository.%
}.
 Additionally, it uses the optimized linear algebra routines from the \soft{Eigen3} package \cite{eigenweb}, the \soft{OpenMP} \soft{C++} module \cite{chandra2001parallel} for the parallelization of the Monte Carlo sampling step and the \soft{xoshiro256+} pseudo random number generator \cite{blackman2021scrambled}.

\ft can be loaded in a \soft{python} environment by importing the file \soft{py\_feyntrop.py}, located in the top directory of the package.
To ensure that \ft was built correctly, one may execute the \soft{python} file \soft{/tests/test\_suite.py}. This script compares the output of \ft against pre-computed values.
To do so, it will locally compute six examples with 1-2 loops and 2-5 legs, some in the Euclidean and others in the Minkowski regime.

The file \soft{py\_feyntrop.py} includes additional functionality for the \soft{python} interface serving three purposes.
Firstly, it simplifies the specification of vertices and edges of a Feynman diagram in comparison to the \soft{C++} interface of \soft{feyntrop}.
Secondly, it allows for self-chosen momentum variables given by a set of replacement rules, instead of having to manually specify the full scalar product matrix $\mathcal P^{u,v}$ from \eqref{eq:laplaceUF}.
Lastly, the output of the $\epsilon$ expansion can be printed in a readable format.

As already indicated in Section~\ref{sec:def}, we employ zero-indexing throughout.
This means that edges and vertices are labeled as $\{0, 1, \ldots\}$. This facilitates 
seamless interoperability with the programming language features of \soft{python}.

\subsection{Basic usage of \texttt{feyntrop}}
In this section, we will illustrate the basic workflow of \ft with an example. The code for this example can be executed and inspected with \soft{jupyter} \cite{Kluyver2016jupyter} by calling
\begin{verbatim}
jupyter notebook tutorial_2L_3pt.ipynb
\end{verbatim}
within the top directory of the \ft package.

We will integrate the following 2-loop 3-point graph in $D = 2 - 2\epsilon$ dimensional spacetime:
\begin{equation*}
    \newcommand{\xs}{2}
    \centering
    \begin{tikzpicture}[baseline=-\the\dimexpr\fontdimen22\textfont2\relax] \begin{feynman} \vertex [dot,label=\(3\)] (v1) at (0, 0.5*\xs) {}; \vertex [dot,label=\(1\)] (v2) at (-\xs, \xs) {}; \vertex [dot,label=270:\(0\)] (v3) at (-\xs,-\xs) {}; \vertex [dot,label=\(2\)] (v4) at ( \xs,0) {}; \vertex[label=left:{$p_1$}] (i2) at (-{3*\xs/2}, {3*\xs/2}); \vertex[label=left:{$p_0$}] (i3) at (-{3*\xs/2},-{3*\xs/2}); \vertex[label=right:{$p_2$}] (i4) at ( {3*\xs/2},0); \diagram*{ (v1)[dot]--[edge label'=1](v2)--[edge label'=0](v3)--[edge label'=3](v4)--[edge label'=2](v1)--[edge label'=4](v3), (i2) -- [scalar](v2),(i3) -- [scalar](v3), (i4) -- [double,double distance=0.4ex] (v4); }; \end{feynman} \end{tikzpicture} \,.
\end{equation*}
The dashed lines denote on-shell, massless particles with momenta 
$p_0$ and $p_1$ such that $p_0^2 = p_1^2 = 0$. 
The solid, internal lines each have mass $m$.
The double line is associated to some off-shell momentum $p_2^2 \neq 0$.
For the convenience of the reader, both vertices and edges are labeled explicitly in this example.
\ft requires us to label the external vertices (as defined in Section~\ref{sec:kinematic_regimes}) \emph{before} the internal vertices.
In the current example, the vertices are
$ V = V^{\mathrm{ext}} \sqcup V^{\mathrm{int}} = \{0,1,2\} \sqcup \{3\}. $

The momentum space Feynman integral representation \eqref{eq:prop} with unit edge weights reads
\eq{
   \mathcal I =
\pi^{-2+2\epsilon} \, 
   \int 
   \frac
   {\dd^{2-2\epsilon} k_0 \, \dd^{2-2\epsilon} k_1 }
   {
        \big( q_0^2 - m^2 + i \varepsilon \big)
        \big( q_1^2 - m^2 + i \varepsilon\big)
        \big( q_2^2 - m^2 + i \varepsilon\big)
        \big( q_3^2 - m^2+ i \varepsilon \big)
        \big( q_4^2 - m^2 + i \varepsilon\big)
   } \, ,
}
where we integrated out the $\delta$ functions in eq.~\eqref{eq:prop}
by requiring that 
$q_0 = k_0$, $q_1 = k_0 + p_1$, $q_2 = k_0+k_1+p_1$, $q_3 = p_0-k_0-k_1$ and $q_4 = k_1$. 
We choose the phase space point
\eq{
    m^2 = 0.2
    \, , \quad
    p_0^2 = p_1^2 = 0 
    \, , \quad
    p_2^2 = 1 \, ,
}
which is in the Minkowski regime because $p_2^2 > 0$ - see Section~\ref{sec:kinematic_regimes}.
To begin this calculation, first open a \soft{python} script or a \soft{jupyter notebook} and import \soft{py\_feyntrop}:
\begin{verbatim}
from py_feyntrop import *
\end{verbatim}
Here we are assuming that \soft{feyntrop.so} and \soft{py\_feyntrop.py} are both in the working directory.

To define the graph, we provide a list of edges with edge weights $\nu_e$ and squared masses $m_e^2$:
\eq{
   \big( \big(u_0, v_0\big) \, , \, \nu_0 \, , \, m_0^2 \big)
   \, , \, \ldots \, ,
   \big( \big(u_{|E|-1}, v_{|E|-1} \big) \, , \, \nu_{|E|-1} \, , \, m_{|E|-1}^2 \big) \, .
}
The notation $(u_e,v_e)$ denotes an edge $e$ incident to the vertices $u_e$ and $v_e$.
We therefore write
\begin{verbatim}
edges = [((0,1), 1, 'mm'), ((1,3), 1, 'mm'), ((2,3), 1, 'mm'), 
         ((2,0), 1, 'mm'), ((0,3), 1, 'mm')]
\end{verbatim} 
in the code to input the graph which is depicted above. The ordering of vertices $\big(u_e, v_e\big)$ in an edge is insignificant.
Here we set $\nu_e = 1$ for all $e$.
The chosen symbol for $m^2$ is \soft{mm}, which will be replaced by its value $0.2$ later on.
It is also allowed to input numerical values for masses already in the \soft{edges} list, for instance by replacing the first element of the list by \soft{((0,1), 1, '0.2')}.

Next we fix the momentum variables.
Recall that the external vertices are required to be labeled $\{0,1,\ldots,|V^\mathrm{ext}|-1\}$, so the external momenta are $p_0, \ldots, p_{|V^\mathrm{ext}|-1}$.
Moreover, the last momentum is inferred automatically by \ft using momentum conservation, leaving $p_0, \ldots, p_{|V^\mathrm{ext}|-2}$ to be fixed by the user.
A momentum configuration is then specified by the  collection of scalar products,
\eq{
    p_u \cdot p_v \text{  for all } 0 \leq u \leq v\leq |V^\mathrm{ext}|-2 \, .
}
In the code, we must provide replacement rules for these scalar products in terms of some variables of choice.
For the example at hand, $|V^\mathrm{ext}| = 3$, so we must provide replacement rules for $p_0^2, \, p_1^2$ and $p_0 \cdot p_1$.
In the syntax of \ft we thus write
\begin{verbatim}
replacement_rules = [(sp[0,0], '0'), (sp[1,1], '0'), (sp[0,1], 'pp2/2')]
\end{verbatim}
where \soft{sp[u,v]} stands for $p_u \cdot p_v$, the \textbf{s}calar \textbf{p}roduct of $p_u$ and $p_v$.
We have immediately set $p_0^2 = p_1^2 = 0$ and also defined a variable \soft{pp2} which stands for $p_2^2$, as, by momentum conservation, 
\eq{
    p_2^2 = 2 p_0 \cdot p_1 \, .
}

Eventually, we fix numerical values for the two auxiliary variables \soft{pp2} and \soft{mm}.
This is done via
\begin{verbatim}
phase_space_point = [('mm', 0.2), ('pp2', 1)]
\end{verbatim}
which fixes $m^2 = 0.2$ and $p_2^2 = 1$. %
It is possible to obtain the $\mathcal P^{u,v}$ matrix (as defined in Section~\ref{sec:def}) and a list of all the propagator masses, which are computed from the previously provided data, by
\begin{verbatim}
P_uv_matrix, m_sqr_list = prepare_kinematic_data(edges, replacement_rules, 
                                                 phase_space_point) 
\end{verbatim}
The final pieces of data that need to be provided are
\begin{verbatim}
D0 = 2
eps_order = 5
Lambda = 7.6
N = int(1e7)
\end{verbatim}
\soft{D0} is the integer part of the spacetime dimension $D = D_0 - 2\epsilon$.
We expand up to, but not including, \soft{eps\_order}.
\soft{Lambda} denotes the deformation parameter from \eqref{eq:iota}.
\soft{N} is the number of Monte Carlo sampling points.

Tropical Monte Carlo integration of the Feynman integral, with the kinematic configuration chosen above, is now performed by running the command
\begin{verbatim}
trop_res, Itr =  tropical_integration(
                    N, 
                    D0, 
                    Lambda, 
                    eps_order, 
                    edges, 
                    replacement_rules, 
                    phase_space_point)
\end{verbatim}
If the program runs correctly (i.e.\ no error is printed), \soft{trop\_res} will contain the $\epsilon$-expansion \eqref{eq:deform_eps} \emph{without} the prefactor $\Gamma(\omega)/(\Gamma(\nu_1) \cdots \Gamma(\nu_{|E|})) = \Gamma(2 \epsilon + 3)$.
\soft{Itr} is the value of the normalization factor in \eqref{eq:mutr}.
Running this code on a laptop, we get, after a couple of seconds, the output
\begin{verbatim}
Prefactor: gamma(2*eps + 3).
(Effective) kinematic regime: Minkowski (generic).
Generalized permutahedron property: fulfilled.
Analytic continuation: activated. Lambda = 7.6
Started integrating using 8 threads and N = 1e+07 points.
Finished in 6.00369 seconds = 0.00166769 hours.

-- eps^0: [-46.59  +/- 0.13]  +  i * [ 87.19  +/- 0.12]
-- eps^1: [-274.46 +/- 0.55]  +  i * [111.26  +/- 0.55]
-- eps^2: [-435.06 +/- 1.30]  +  i * [-174.47 +/- 1.33]
-- eps^3: [-191.72 +/- 2.15]  +  i * [-494.69 +/- 2.14]
-- eps^4: [219.15  +/- 2.68]  +  i * [-431.96 +/- 2.67]
\end{verbatim}
These printed values for the $\epsilon$ expansion are contained in the list \soft{trop\_res} in the following format:
\begin{align*} \big[ \big( (\text{re}_0, \, \sigma^\text{re}_0) \, , \, (\text{im}_0, \sigma^\text{im}_0) \big) \, , \, \ldots , \big( (\text{re}_4, \sigma^\text{re}_4) \, , \, (\text{im}_4, \sigma^\text{im}_4) \big) \big] \, , \end{align*}
where $\text{re}_0 \pm \sigma^\text{re}_0$ is the real part of the $0$th order term, and so forth.

The $\epsilon$-expansion, with prefactor included, can finally be output via
\begin{verbatim}
eps_expansion(trop_res, edges, D0)
\end{verbatim}
giving
\begin{verbatim}
174.3842115*i - 93.17486662 + eps*(-720.8731714 + 544.3677186*i) + 
eps**2*(-2115.45025 + 496.490128*i) + eps**3*(-3571.990969 - 677.5254794*i) + 
eps**4*(-3872.475723 - 2726.965026*i) + O(eps**5)
\end{verbatim}

If the \soft{tropical\_integration} command fails, for instance because a subdivergence of the input graph is detected, it prints an error message.
The command also prints a warning if the kinematic point is too exceptional and convergence cannot be guaranteed due to the $\cF$ polynomial lacking the generalized permutahedron property (see Section~\ref{sec:genperm}).

\subsection{Deformation parameter}

The uncertainties on the integrated result may greatly vary with the value of the deformation parameter $\lambda$ from \eqref{eq:iota} (what was called \soft{Lambda} above).
Moreover, the optimal value of $\lambda$ might change depending on the phase space point.
It is up to the user to pick a suitable value by trial and error, for instance by integrating several times with a low number of sampling points $N$. In Section~\ref{sec:examples}, this method is used to evaluate multiple examples of Feynman integrals in the Minkowski regime. Typical values for the parameter $\lambda$ can be found there.
It would be beneficial to automate this procedure, possibly by minimizing the sampling variance with respect to~$\lambda$, for instance by solving $\partial_\lambda \sigma_f = 0$ with $\sigma_f$ defined in \eqref{sigma_f}, or by tightening the bounds in Assumption~\ref{ass:approx} (see the discussion after this assumption). We leave the exploration of such ideas to future research.

Note that $\lambda$ has mass dimension $1/\text{mass}^2$. Heuristically, this implies that the value of $\lambda$ should be of order $\mathcal O(1/\Lambda^2)$, where $\Lambda$ is the maximum physical scale in the given computation.

\section{Examples of Feynman integral evaluations}
\label{sec:examples}

In this section, we use \ft to numerically evaluate certain Feynman integrals of interest.
The first two examples, \ref{sec:5L_2pt} and \ref{sec:3L_4pt}, show that \ft is capable of computing Feynman integrals at high loop-orders involving many kinematic scales.
The four examples that follow, \ref{sec:2L_5pt}, \ref{sec:2L_4pt}, \ref{sec:4L_0pt} and  \ref{sec: gg3Higgs}, demonstrate that \ft is capable of computing phenomenologically relevant diagrams.
The final example, \ref{sec:conformal}, is an invitation to study conformal integrals with our code, as they are important for, e.g., $\mathcal N=4$ SYM and the cosmological bootstrap.

We have chosen phase space points which are not close to thresholds to insure good numerical convergence, and expand up to and including $\epsilon^{2L}$ in all but up the last example.

Each of the following examples can be computed with \ft using $10^8$ sampling points within a few minutes on a consumer laptop with 16GBs of RAM. To crosscheck, we used the same machine to evaluate the examples using both \soft{AMFlow}%
\footnote{As \soft{AMFlow} relies on DEQs for Feynman integrals, it is necessary to link it to IBP software. In our examples, we tried the following two options for IBP software: 1) \soft{FIRE} \cite{Smirnov:2019qkx} combined with \soft{LiteRed} \cite{Lee:2012cn,Lee:2013mka}, and 2) \soft{Blade} \cite{blade}.}
and \soft{pySecDec}. 
All computations agreed within the indicated error bounds. Our computations using \soft{AMFlow} and \soft{pySecDec} did not always terminate. Particularly for the Examples \ref{sec:5L_2pt} and \ref{sec:4L_0pt}, neither software finished due to memory constraints of 16GB on our test laptop.
After the initial version of this article became available, Vitaly Magerya informed us that he was able to reproduce also Example~\ref{sec:4L_0pt} and verify our numbers using \soft{pySecDec} with an only slightly more powerful computer. He also found indication that Example~\ref{sec:5L_2pt} is reproducible using a new version of \soft{pySecDec} that was made available three months after the initial version of the present article was posted~\cite{Heinrich:2023til}.

We emphasize that these additional computations using \soft{AMFlow} and \soft{pySecDec} should be seen as a crosscheck and not a benchmark comparison. 
A comparison of \ft and \soft{AMFlow} is difficult as the former directly integrates via Monte Carlo while the latter integrates via differential equations. To integrate a Feynman integral using \soft{AMFlow} an IBP system needs to be solved. Finding this solution is a memory constrained problem and a 16GB laptop is not appropriate to systematically perform computations within this approach. If the IBP system is solved, \soft{AMFlow} provides the evaluated integral at an accuracy which is almost unachievable using a Monte Carlo approach.
The comparison to \soft{pySecDec} is similarly flawed as it can also deal with inherently divergent integrals. To do so it has to check for divergences in each sector which takes time. Moreover, it can deal with completely general algebraic integrals, whereas \ft completely relies on the inherent mathematical structure of Feynman integrals.
We postpone a proper benchmark comparison with the new version of \soft{pySecDec} and updated versions of \soft{AMFlow} to a future research project.

To further highlight the capabilities of \soft{feyntrop}, we computed every example on a high-performance machine, namely a single \soft{AMD EPYC 7H12} 64-core processor using all cores. For each example we use $10^8$ sample points to get a relative accuracy of the order of $10^{-2}$ to $10^{-4}$. The output for each example includes the total evaluation time that \ft needs to compute the respective diagram. This evaluation time includes all steps of the computation. The time needed for the preprocessing step is negligible  in comparison to the sampling time as long as the number of edges is relatively small (i.e.~$|E|\leq 15$). Hence, for such moderate numbers of propagators, the evaluation time is proportional to the number of sample points. The sampling step is completely parallelizable. So, doubling the number of CPUs, halfs the evaluation time. As the evaluation is based on Monte Carlo, increasing the relative accuracy is costly: one additional digit costs a $100$-fold increase in CPU-time. %

The code for each example can be found on the \soft{github} repository in the folder \soft{examples}.

\subsection{A 5-loop 2-point zigzag diagram}
\label{sec:5L_2pt}

We evaluate the following 5-loop 2-point function with all masses different in $D = 3-2\epsilon$ dimensions

\begin{equation*}
    \centering
    \begin{tikzpicture}[baseline=-\the\dimexpr\fontdimen22\textfont2\relax] \begin{feynman} \vertex [dot,label=93:\(0\)] (v1) at ( 2*0.50,0) {}; \vertex [dot,label=87:\(1\)] (v2) at (2*1.50,0) {}; \vertex [dot,label=270:\(6\)] (v3) at (2*0.60,-2*0.30) {}; \vertex [dot,label=\(5\)] (v4) at ( 2*0.80,2*0.46) {}; \vertex [dot,label=270:\(4\)] (v5) at ( 2*1,-2*0.5) {}; \vertex [dot,label=\(3\)] (v6) at ( 2*1.2,2*0.46) {}; \vertex [dot,label=270:\(2\)] (v7) at ( 2*1.4,-2*0.3) {}; \vertex (a) at (0,0) {}; \vertex (b) at (4,0) {}; \diagram*{ (v3)--(v4)--(v5)--(v6)--(v7),(a)--(v1), (b)--(v2), ; }; \draw({2,0}) circle(1); \end{feynman} \end{tikzpicture}
\end{equation*}
corresponding to the edge set
\begin{verbatim}
edges = [((0,6), 1, '1') , ((0,5), 1, '2'), ((5,6), 1, '3'), 
         ((6,4), 1, '4') , ((5,3), 1, '5'), ((5,4), 1, '6'), 
         ((4,3), 1, '7') , ((4,2), 1, '8'), ((3,2), 1, '9'), 
         ((3,1), 1, '10'), ((2,1), 1, '11')]
\end{verbatim}
Here we already input the chosen values for masses, namely $m_e^2 = e+1$ for $e = 0, \ldots, 10$.

There is only a single independent external momentum $p_0$, whose square we set equal to $100$ via
\begin{verbatim}
replacement_rules = [(sp[0,0], 'pp0')]
phase_space_point = [('pp0', 100)]
\end{verbatim}
The value $\lambda = 0.02$ turns out to give small errors, which is of order $\mathcal O(1/p_0^2)$ in accordance with the comment at the end of the previous section.
Using $N = 10^{8}$ Monte Carlo sampling points, \soft{feyntrop}'s \soft{tropical\_integration} command gives 
{\small
\begin{verbatim}
Prefactor: gamma(5*eps + 7/2).
(Effective) kinematic regime: Minkowski (generic).
Finished in 9.62 seconds.
-- eps^0: [0.0001976 +/- 0.0000016]  +  i * [0.0001415 +/- 0.0000018]
-- eps^1: [-0.004961 +/- 0.000023 ]  +  i * [-0.000802 +/- 0.000024 ]
-- eps^2: [ 0.04943  +/-  0.00017 ]  +  i * [-0.01552  +/-  0.00017 ]
-- eps^3: [-0.25468  +/-  0.00083 ]  +  i * [ 0.24778  +/-  0.00093 ]
-- eps^4: [ 0.5909   +/-  0.0033  ]  +  i * [ -1.7261  +/-  0.0038  ]
-- eps^5: [  1.048   +/-   0.012  ]  +  i * [  7.410   +/-   0.013  ]
-- eps^6: [ -14.652  +/-   0.037  ]  +  i * [ -20.933  +/-   0.038  ]
-- eps^7: [  65.87   +/-   0.10   ]  +  i * [  35.25   +/-   0.11   ]
-- eps^8: [ -190.90  +/-   0.27   ]  +  i * [  -4.91   +/-   0.26   ]
-- eps^9: [ 393.08   +/-   0.70   ]  +  i * [ -182.56  +/-   0.59   ]
-- eps^10:[ -558.01  +/-   1.64   ]  +  i * [ 685.62   +/-   1.29   ]
\end{verbatim}
}
We have not been able to compute this expansion with \soft{AMFlow} for the sake of verification. The memory constraints of 16GB were insufficient.  \soft{pySecDec} applied to this example exhausted the available memory while building the sector decomposition library on our test laptop, but  Vitaly Magerya informed us that he was able to create the integration library on a \texttt{32GB 8-core Intel i7} computer in a couple of hours.
We again emphasize that, for a proper benchmark comparison, our \soft{AMFlow} and \soft{pySecDec} 
code should be put on a machine with more memory. Still, this example illustrates that \ft can operate at high loop order with little memory, CPU and time resources.

\subsection{A 3-loop 4-point envelope diagram}
\label{sec:3L_4pt}

Here, we evaluate a $D = 4-2\epsilon$ dimensional, non-planar, 3-loop 4-point, envelope diagram:

\begin{equation*}%
    \centering
    \begin{tikzpicture}[baseline=-\the\dimexpr\fontdimen22\textfont2\relax] \begin{feynman} \vertex [dot,label={\(2\)}] (v1) at ( 1, 1) {}; \vertex [dot,label={\(1\)}] (v2) at (-1, 1) {}; \vertex [dot,label=270:{\(0\)}] (v3) at (-1,-1) {}; \vertex [dot,label=270:{\(3\)}] (v4) at ( 1,-1) {}; \vertex (i1) at ( {1.8}, {1.8}); \vertex (i2) at (-{1.8}, {1.8}); \vertex (i3) at (-{1.8},-{1.8}); \vertex (i4) at ( {1.8},-{1.8}); \vertex [circle] (o) at (0,0) {}; \vertex [dot] (d1) at ({-0.5},{-0.5}){}; \vertex [dot] (d2) at ({0.5},{-0.5}){}; \diagram*{ (v1)[dot]--(v2)--(v3)--(v4)--(v1), (i1) -- (v1), (i2) -- (v2),(i3) -- (v3), (i4) -- (v4), (v3)--(o)--(v1),(v2)--(v4), }; \end{feynman} \end{tikzpicture}
\end{equation*}
The dots on the crossed lines represent squared propagators, i.e.\ edge weights equal to $2$, rather than vertices.
The weighted edge set with corresponding mass variables is thus
\begin{verbatim}
edges = [((0,1), 1, 'mm0'), ((1,2), 1, 'mm1'), ((2,3), 1, 'mm2'), 
         ((3,0), 1, 'mm3'), ((0,2), 2, 'mm4'), ((1,3), 2, 'mm5')]
\end{verbatim}
Let us define the two-index Mandelstam variables $s_{ij} = (p_i + p_j)^2$, which are put into \soft{feyntrop}'s replacement rules in the form \soft{(sp[i,j], '(sij - ppi - ppj)/2)')} for $0 \leq i < j \leq 2$.
The chosen phase space point is 
\eq{
    &
    p_0^2 = 1.1
    \, , \quad
    p_1^2 = 1.2
    \, , \quad
    p_2^2 = 1.3
    \, , \quad
    s_{01} = 2.1
    \, , \quad
    s_{02} = 2.2
    \, , \quad
    s_{12} = 2.3 \, ,
    \\
    &
    m_0^2 = 0.05
    \,, \quad
    m_1^2 = 0.06
    \,, \quad
    m_2^2 = 0.07
    \,, \quad
    m_3^2 = 0.08
    \, , \quad
    m_4^2 = 0.09
    \, , \quad
    m_5^2 = 0.1 \, .
    \nonumber
}
With additional settings
$ \lambda = 1.24 \text{ and } N = 10^{8} \, , $
we find
{\small
\begin{verbatim}
Prefactor: gamma(3*eps + 2).
(Effective) kinematic regime: Minkowski (generic).
Finished in 5.12 seconds.
-- eps^0: [-10.8335 +/- 0.0084]  +  i * [-12.7145 +/- 0.0083]
-- eps^1: [ 47.971  +/- 0.059 ]  +  i * [-105.057 +/- 0.059 ]
-- eps^2: [ 413.05  +/-  0.23 ]  +  i * [  7.29   +/-  0.23 ]
-- eps^3: [ 372.07  +/-  0.65 ]  +  i * [ 947.82  +/-  0.65 ]
-- eps^4: [-1412.36 +/-  1.45 ]  +  i * [1325.74  +/-  1.45 ]
-- eps^5: [-2726.00 +/-  2.67 ]  +  i * [-1295.36 +/-  2.69 ]
-- eps^6: [ 287.25  +/-  4.28 ]  +  i * [-3982.04 +/-  4.30 ]
\end{verbatim}
}
We verified these numbers using \soft{pySecDec}.
The test machine's memory of 16GBs was exhausted before \soft{AMFlow} could finish the calculation.
The examples in \cite{Liu:2021wks} indicate that using a computer with more memory might also make this 3-loop diagram accessible using \soft{AMFlow}.

\subsection{A 2-loop 4-point \texorpdfstring{$\mu e$}{}-scattering diagram}
\label{sec:2L_4pt}

We evaluate a non-planar, 2-loop 4-point diagram appearing in muon-electron scattering 
\cite{Broggio:2022htr}, which is finite in $D = 6 - 2\epsilon$ dimensions.
It was previously evaluated for vanishing electron mass in \cite{DiVita:2018nnh}.

\begin{equation*}
    \centering
    \begin{tikzpicture}[baseline=-\the\dimexpr\fontdimen22\textfont2\relax] \begin{feynman} \vertex [dot,label={\(2\)}] (v1) at ( 1, 1) {}; \vertex [dot,label={\(5\)}] (v2) at (-1, 1) {}; \vertex [dot,label=270:{\(4\)}] (v3) at (-1,-1) {}; \vertex [dot,label=270:{\(3\)}] (v4) at ( 1,-1) {}; \vertex [dot,label={\(1\)}] (v5) at (-3,1) {}; \vertex [dot,label=270:{\(0\)}] (v6) at (-3,-1) {}; \vertex (i1) at ( {1.8}, {1.8}); \vertex (i2) at (-{3.8}, {1.8}); \vertex (i3) at (-{3.8},-{1.8}); \vertex (i4) at ( {1.8},-{1.8}); \vertex [circle] (o) at (0,0) {}; \diagram*{ (i1)--(v1)--(v5)--(i2), (i3)--[double,double distance=0.4ex](v6)--[double,double distance=0.4ex](v3)--[double,double distance=0.4ex](v4)--[double,double distance=0.4ex](i4), (v3)--[scalar](o)--[scalar](v1),(v2)--[scalar](v4), (v5)--[scalar](v6), }; \end{feynman} \end{tikzpicture}
\end{equation*}

The dashed lines represent photons, the solid lines are electrons with mass $m$, and the double lines are muons with mass $M$ (which is approximately 200 times larger than $m$).
The edge set is
\begin{verbatim}
edges = [((0,1), 1, '0'), ((0,4), 1, 'MM'), ((1,5), 1, 'mm'), ((5,2), 1, 'mm'),
         ((5,3), 1, '0'), ((4,3), 1, 'MM'), ((4,2), 1, '0')]
\end{verbatim}
where \soft{MM} and \soft{mm} stand for $M^2$ and $m^2$ respectively.
With a phase space point similar to that of \cite[Section~4.1.2]{DiVita:2018nnh}
\eq{
    p_0^2 &= M^2 = 1
    \, , \quad
    p_1^2 = p_2^2 = m^2 = 1/200
    \, , \quad
    s_{01} = -1/7 \, ,
    \\
    s_{12} &= -1/3
    \, , \quad
    s_{02} = 2M^2 - 2m^2 - s_{01} - s_{12} = 2.49 
    \nonumber
}
and settings
$ \lambda = 1.29 \, , \, N = 10^{8} \, , $
the result becomes
{\small 
\begin{verbatim}
Prefactor: gamma(2*eps + 1).
(Effective) kinematic regime: Minkowski (exceptional).
Finished in 6.53 seconds.
-- eps^0: [1.16483 +/- 0.00083]  +  i * [0.24155 +/- 0.00074]
-- eps^1: [5.5387  +/- 0.0086 ]  +  i * [2.2818  +/- 0.0093 ]
-- eps^2: [15.171  +/-  0.058 ]  +  i * [10.079  +/-  0.064 ]
-- eps^3: [ 28.02  +/-  0.32  ]  +  i * [ 28.17  +/-  0.28  ]
-- eps^4: [ 38.20  +/-  1.42  ]  +  i * [ 56.94  +/-  0.85  ]
\end{verbatim}
}
\noindent 
The momentum configuration is exceptional, so we cannot be sure that the generalized permutahedron property holds - see Section~\ref{sec:genperm}.
In spite of that, \ft gives the correct numbers, which we confirmed using both \soft{AMFlow} and \soft{pySecDec}. 

The leading order term differs from \cite[eq.~(4.20)]{DiVita:2018nnh} by roughly $10\%$ due to our inclusion of the electron mass.
We do, however, reproduce the computation in this reference if we set this mass to $0$ in the \soft{feyntrop} configuration.

\subsection{A QCD-like, 2-loop 5-point diagram}
\label{sec:2L_5pt}

This example is a QCD-like, $D = 6-2\epsilon$ dimensional, 2-loop 5-point diagram:
\begin{equation*}
    \centering
    \begin{tikzpicture}[baseline=-\the\dimexpr\fontdimen22\textfont2\relax] \begin{feynman} \vertex [dot,label={\(3\)}] (v1) at ( 1, 1) {}; \vertex [dot,label={\(6\)}] (v2) at (-1, 1) {}; \vertex [dot,label=270:{\(5\)}] (v3) at (-1,-1) {}; \vertex [dot,label=270:{\(4\)}] (v4) at ( 1,-1) {}; \vertex [dot,label={\(2\)}] (v5) at (-3,1) {}; \vertex [dot,label=270:{\(0\)}] (v6) at (-3,-1) {}; \vertex [dot,label=270:{\(1\)}] (v7) at (-4,0) {}; \vertex (i1) at ( {1.8}, {1.8}); \vertex (i2) at (-{3.8}, {1.8}); \vertex (i3) at (-{3.8},-{1.8}); \vertex (i4) at ( {1.8},-{1.8}); \vertex (i5) at (-5.2,0); \vertex [circle] (o) at (0,0) {}; \diagram*{ (i1)--(v1)--(v2)--[scalar](v5)--(v7)--[scalar](v6)--[scalar](v3)--(v4)--[scalar](v1), (v2)--(v3), (i2)--(v5), (i5)--(v7), (i3)--[scalar](v6), (i4)--[double,double distance=0.4ex](v4), }; \end{feynman} \end{tikzpicture}
\end{equation*}
The dashed lines represent gluons, the solid lines are quarks each with mass $m$, and the double line is some off-shell momentum $p_4^2 \neq 0$ fixed by conservation.
The edge data are
\begin{verbatim}
edges = [((0,1), 1, '0'), ((1,2), 1, 'mm'), ((2,6), 1, '0'), ((6,3), 1, 'mm'),
         ((3,4), 1, '0'), ((4,5), 1, 'mm'), ((5,0), 1, '0'), ((5,6), 1, 'mm')]
\end{verbatim}
where \soft{mm} stands for $m^2$.
Let us choose the phase space point
\eq{
    p_0^2 &= 0
    \, , \quad
    p_1^2 = p_2^2 = p_3^2 = m^2 = 1/2
    \, , \quad
    s_{01} = 2.2
    \, , \quad
    s_{02} = 2.3 \, ,
    \\
    s_{03} &= 2.4
    \, , \quad
    s_{12} = 2.5
    \, , \quad
    s_{13} = 2.6
    \, , \quad
    s_{23} = 2.7 \, ,
    \nonumber
}
where again $s_{ij} = (p_i + p_j)^2$.
Finally, setting
$ \lambda = 0.28 \, , \, N = 10^{8} \, , $
we obtain
{\small
\begin{verbatim}
Prefactor: gamma(2*eps + 2).
(Effective) kinematic regime: Minkowski (exceptional).
Finished in 8.20 seconds.
-- eps^0: [0.06480 +/- 0.00078]  +  i * [-0.08150 +/- 0.00098]
-- eps^1: [0.4036  +/- 0.0045 ]  +  i * [ 0.3257  +/- 0.0035 ]
-- eps^2: [-0.7889 +/- 0.0060 ]  +  i * [ 0.957   +/-  0.016 ]
-- eps^3: [-1.373  +/-  0.030 ]  +  i * [ -1.181  +/-  0.034 ]
-- eps^4: [ 1.258  +/-  0.088 ]  +  i * [ -1.205  +/-  0.036 ]
\end{verbatim}
}
\noindent
The kinematic configuration is again exceptional.
Nevertheless, \ft returns the correct numbers, which we verified with \soft{pySecDec}%
\footnote{An earlier version of this article wrongly stated that this computation was not verifiable with \soft{pySecDec}. We thank both an anonymous referee and Vitaly Magerya for pointing this out to us. 
}. We were not able to compute this diagram with \soft{AMFlow} due to our memory constraints.
As similarly intricate Feynman integrals can  be evaluated with \soft{AMFlow} using more memory (see~\cite{Liu:2021wks}), these constraints are very likely the only obstruction for a crosscheck with \soft{AMFlow}.
\subsection{Diagram contributing to triple Higgs production via gluon fusion}
\label{sec: gg3Higgs}
In this example, we evaluate the following diagram contributing to the process%
\footnote{We thank Babis Anastasiou for suggesting this example.}
$gg \to HHH$ in $D = 4 - 2 \epsilon$ dimensions:
\begin{equation*}
    \centering
    \begin{tikzpicture}[baseline=-\the\dimexpr\fontdimen22\textfont2\relax] \begin{feynman} \vertex [dot,label={\(1\)}] (v1) at ( -1, 1) {}; \vertex [dot,label={\(6\)}] (v2) at (1, 1) {}; \vertex [dot,label=270:{\(5\)}] (v3) at (1,-1) {}; \vertex [dot,label=270:{\(0\)}] (v4) at ( -1,-1) {}; \vertex [dot,label={\(2\)}] (v5) at (3,1) {}; \vertex [dot,label=270:{\(4\)}] (v6) at (3,-1) {}; \vertex [dot,label=270:{\(3\)}] (v7) at (4,0) {}; \vertex (i1) at ( -{1.8}, {1.8}); \vertex (i2) at ({3.8}, {1.8}); \vertex (i3) at ({3.8},-{1.8}); \vertex (i4) at ( -{1.8},-{1.8}); \vertex (i5) at (5.2,0); \vertex [circle] (o) at (0,0) {}; \diagram*{ (i1)--[scalar](v1)--(v2)--(v5)--(v7)--(v6)--(v3)--(v4)--(v1), (v2)--[scalar](v3), (i2)--[double,double distance=0.4ex](v5), (i5)--[double,double distance=0.4ex](v7), (i3)--[double,double distance=0.4ex](v6), (i4)--[scalar](v4), }; \end{feynman} \end{tikzpicture}
\end{equation*}
The dashed lines are massless propagators (representing  gluons), the single solid lines are propagators containing the top quark mass, and the three external double lines are put on-shell to the Higgs mass.
In this case, the list of edges reads
\begin{verbatim}
     edges = [((0,1), 1, 'mm_top'), ((1,6), 1, 'mm_top'), ((5,6), 1, '0'),  
              ((6,2), 1, 'mm_top'), ((2,3), 1, 'mm_top'), ((3,4), 1, 'mm_top'),
              ((4,5), 1, 'mm_top'), ((5,0), 1, 'mm_top')]
\end{verbatim}
with \texttt{mm\_top} being the square of the top quark mass, $m_t^2$.

Given $s_{ij} := (p_i + p_j)^2$, we employ the following kinematic setup:
\eq{
    \nonumber
    p_0^2 &= p_1^2 = 0
    \, , \quad
    p_2^2 = p_3^2 = p_4^2 = m_H^2 
    \, , \\
    s_{01} &= 5 m_H^2 - s_{02} - s_{03} - s_{12} - s_{13} - s_{23} 
    \, .
    \label{eq:gg_3H_s01}
}
The kinematic space is then parameterized by $(s_{02}, \, s_{03}, \, s_{12}, \, s_{13}, \, s_{23}, \, m_t^2, \, m_H^2)$.

Let us evaluate this integral at the phase space point
\eq{
    m_t^2 &= 1.8995
    \, , \quad
    m_H^2 = 1
    \, , \\ \nonumber
    s_{02} = -4.4
    \, , \quad
    s_{03} &= -0.5
    \, , \quad
    s_{12} = -0.6
    \, , \quad
    s_{13} = -0.7
    \, , \quad
    s_{23} = 1.8
    \, ,
}
which lies in the physical region, and has the physically relevant mass ratio $m_t^2 / m_H^2 = 1.8995$.
The remaining Mandelstam invariants are then fixed by momentum conservation to $$(s_{01}, \, s_{04}, \, s_{14}, \, s_{24}, \, s_{34}) = (9.4, \, -1.5, \, -5.1, \, 7.2, \, 3.4).$$

Setting $\lambda = 0.64$ and $N = 10^8$, we get
{\small
\begin{verbatim}
Prefactor: gamma(2*eps + 4).
(Effective) kinematic regime: Minkowski (generic).
Finished in 8.12 seconds.
-- eps^0: [-0.0114757 +/- 0.0000082]  +  i * [0.0035991 +/- 0.0000068]
-- eps^1: [ 0.003250  +/- 0.000031 ]  +  i * [-0.035808 +/- 0.000041 ]
-- eps^2: [ 0.046575  +/- 0.000098 ]  +  i * [0.016143  +/- 0.000088 ]
-- eps^3: [ -0.01637  +/-  0.00017 ]  +  i * [ 0.03969  +/-  0.00016 ]
-- eps^4: [ -0.02831  +/-  0.00023 ]  +  i * [-0.00823  +/-  0.00024 ]
\end{verbatim}
}

We were unable to evaluate this example in reasonable time with \soft{AMFlow}. Again, adding more memory would likely solve this problem. With \soft{pySecDec} we were able to confirm \soft{feyntrop}'s numbers within 3 hours\footnote{
Three months after the initial version of this article was posted, 
a new version of \soft{pySecDec} became available which is, in some cases, up to four times as efficient as the former version~\cite{Heinrich:2023til}.
We postpone a systematic comparison of \ft with this new version to a future research project.
}
 on a laptop, with relative errors around $10^{-2}$. Running \ft on the same laptop with $10^8$ sampling points, we obtain the same numbers within $2.5$ minutes and with relative errors of order $10^{-3}$.

\subsection{A QED-like, 4-loop vacuum diagram}
\label{sec:4L_0pt}

Next we evaluate a QED-like, 4-loop vacuum diagram in $D = 4-2\epsilon$ dimensions:

\begin{equation*}
    \centering
    \begin{tikzpicture}[baseline=-\the\dimexpr\fontdimen22\textfont2\relax] \begin{feynman} \vertex [dot,label=270:\(4\)] (v1) at (0,0.8*1) {}; \vertex [dot,label=60:\(5\)] (v2) at ({-0.8*0.866},{-0.8*0.5}) {}; \vertex [dot,label=120:\(3\)] (v3) at ({0.8*0.866},{-0.8*0.5}) {}; \vertex [dot,label=\(1\)] (v4) at (0,1.5*1) {}; \vertex [dot,label=240:\(0\)] (v5) at ({-1.5*0.866},{-1.5*0.5}) {}; \vertex [dot,label=300:\(2\)] (v6) at ({1.5*0.866},{-1.5*0.5}) {}; \diagram*{ (v1)--[scalar](v4),(v2)--[scalar](v5),(v3)--[scalar](v6) ; }; \draw({0,0}) circle(0.8); \draw({0,0}) circle(1.5); \end{feynman} \end{tikzpicture}
\end{equation*}
The dashed lines represent photons, and the solid lines are electrons of mass $m$.
No analytic continuation is required in this case since there are no external momenta - the final result should hence be purely real.
We specify 
\begin{verbatim}
replacement_rules = []
\end{verbatim}
in the code to indicate that all scalar products are zero.

The collection of edges is
\begin{verbatim}
edges = [((0,1), 1, 'mm'), ((1,2), 1, 'mm'), ((2,0), 1, 'mm'),
         ((0,5), 1, '0' ), ((1,4), 1, '0' ), ((2,3), 1, '0' ),
         ((3,4), 1, 'mm'), ((4,5), 1, 'mm'), ((5,3), 1, 'mm')]
\end{verbatim}
where \soft{mm} stands for $m^2$.
Choosing
\begin{verbatim}
phase_space_point = [('mm', 1)]
\end{verbatim}
and setting
$ \lambda = 0 \, , \, N = 10^{8} \, , $
we then find
{\small
\begin{verbatim}
Prefactor: gamma(4*eps + 1).
(Effective) kinematic regime: Euclidean (generic).
Finished in 3.58 seconds.
-- eps^0: [3.01913 +/- 0.00047]  +  i * [0.0 +/- 0.0]
-- eps^1: [-7.0679 +/- 0.0021 ]  +  i * [0.0 +/- 0.0]
-- eps^2: [20.5399 +/- 0.0074 ]  +  i * [0.0 +/- 0.0]
-- eps^3: [-27.895 +/-  0.024 ]  +  i * [0.0 +/- 0.0]
-- eps^4: [62.043  +/-  0.074 ]  +  i * [0.0 +/- 0.0]
-- eps^5: [-59.46  +/-  0.23  ]  +  i * [0.0 +/- 0.0]
-- eps^6: [155.27  +/-  0.73  ]  +  i * [0.0 +/- 0.0]
-- eps^7: [-90.81  +/-  2.26  ]  +  i * [0.0 +/- 0.0]
-- eps^8: [403.78  +/-  6.71  ]  +  i * [0.0 +/- 0.0]
\end{verbatim}
}
We were not able to verify this example with \soft{AMFlow} or \soft{pySecDec} within our memory constraints. 
However,  Vitaly Magerya informed us that he was able to verify these numbers with \soft{pySecDec} in under one hour using an only slightly larger computer.

\subsection{An elliptic, conformal, 4-point integral}
\label{sec:conformal}

The final example is a 1-loop 4-point conformal integral with edge weights $\nu_{1,\ldots,4} = 1/2$ in $D = 2$ dimensions, the result of which was computed in terms of elliptic $K$ functions in \cite[Sec.~7.2]{Corcoran:2021gda}:
\begin{equation}
    \centering
    \begin{tikzpicture}[baseline=-\the\dimexpr\fontdimen22\textfont2\relax] \begin{feynman} \vertex [dot] (o) at (0,0) {}; \vertex [label=\(x_0\)] (v1) at (0,1.5) {}; \vertex [label=\(x_3\)] (v2) at (-1.5,0) {}; \vertex [label=270:\(x_2\)] (v3) at (0,-1.5) {}; \vertex [label=\(x_1\)] (v4) at (1.5,0) {}; \diagram*{ (v1)--(o)--(v2),(v3)--(o)--(v4); }; \end{feynman} \end{tikzpicture}\quad=\frac{4}{\sqrt{-p_2^2}}\left[K(z)K(1-\bar{z})+K(\bar{z})K(1-z)\right]
    \label{conformal_integral}
\end{equation}

\noindent The denominator above differs from \cite[eq. (7.6)]{Corcoran:2021gda} because we have used conformal symmetry to send $x_3 \to \infty$, thereby reducing the kinematic space to that of a 3-point integral.
After identifying dual momentum variables $x_i$ in terms of ordinary momenta as $p_i = x_i - x_{i+1}$, the conformal cross ratios,
with the usual single-valued complex parameterization in terms of $z$ and $\bar z$,
read
\eq{
    z \bar{z} = 
    \frac{p_0^2}{p_2^2} 
    \, , \quad
    (1-z) (1-\bar{z}) = \frac{p_1^2}{p_2^2} \, .
}
In \ft we specify the associated 1-loop 3-point momentum space integral as
\begin{verbatim}
edges = [((0,1), 1/2, '0'), ((1,2), 1/2, '0'), ((2,0), 1/2, '0')]
\end{verbatim}
where all internal masses are zero and edge weights are set to $1/2$.

We choose a momentum configuration in the Euclidean regime:
\eq{
    p_0^2 = -2
    \, , \quad
    p_1^2 = -3
    \, , \quad
    p_2^2 = -5 \, .
}
Although \ft can compute integrals with rational edge weights in the Minkowski regime, it is is most natural to study conformal integrals in the Euclidean regime.

With
$ \lambda = 0 \text{ and } N = 10^8 \, , $
we then obtain
{\small
\begin{verbatim}
(Effective) kinematic regime: Euclidean (generic).
Finished in 1.34 seconds.
-- eps^0: [9.97192 +/- 0.00027]  +  i * [0.0 +/- 0.0]
\end{verbatim}
}
The result agrees with the analytic expression \eqref{conformal_integral}.
This example also illustrates the high efficiency of \ft in the Euclidean regime where 
very high accuracies can be obtained quickly.

\section{Conclusions and outlook}
\label{sec:conclusions}

With this article we introduced \texttt{feyntrop}, a general tool to numerically evaluate quasi-finite Feynman integrals in the physical regime with sufficiently general kinematics. To do so, we gave a  detailed classification of different kinematic regimes that are relevant for numerical integration. Moreover, we presented a completely projective integral expression for concretely $i \varepsilon$-deformed Feynman integrals and their dimensionally regularized expansions.  We used tropical sampling for the numerical integration, which we briefly reviewed, and we discussed the relevant issues on facet presentations of the Newton polytopes of Symanzik polynomials in detail. To be able to perform the numerical integration efficiently, we gave formulas and algorithms for the fast evaluation of Feynman integrals.
To give a concise usage manual for \ft and to illustrate its capabilities, we gave numerous, detailed examples of evaluated Feynman integrals.

The most important restrictions of \ft are 1) it is not capable of dealing with Feynman integrals that have subdivergences (i.e.~non-quasi-finite integrals) and 2) it is not capable of dealing with certain highly exceptional kinematic configurations. 

The first restriction can be lifted by implementing an analytic continuation of the integrand in the spirit of \cite{Nilsson:2013,Berkesch:2014,vonManteuffel:2014qoa} into \soft{feyntrop}. Naively, preprocessing input integrals with such a procedure increases the number of Feynman integrals and thereby also the necessary computer time immensely. However, this proliferation of terms comes from the expansion of the derivatives of the $\cU$ and $\cF$ polynomials as numerators. This expansion can be avoided, because also the derivatives of $\cU$ and $\cF$ (mostly) have the generalized permutahedron property, and because we have fast algorithms to evaluate such derivatives. For instance, we derived a fast algorithm to evaluate the first and second derivatives of $\cF$ in Section~\ref{sec:evaluation}. We postpone the elaboration and implementation of this approach to future work.

A promising approach to lift the second restriction is to try to understand the general shape of the $\cF$ polynomial's Newton polytope. Outside of the Euclidean and generic kinematic regimes, this polytope is not always a generalized permutahedron. In these exceptional kinematic situations, it can have new facets that cannot be explained by known facet presentations. It might be possible to explain these new facets with the help of the Coleman--Norton picture of infrared divergences~\cite{Coleman:1965xm} (see, e.g., \cite{Borinsky:2022msp} where explicit per-diagram factorization of Feynman integrals was observed in a position space based framework).
An alternative approach to fix the issue is to implement the tropical sampling approach that requires a full triangulation of the respective Newton polytopes (see \cite[Sec.~5]{Borinsky:2020rqs}).

Besides this there are numerous, desirable, gradual improvements of \ft that we also postpone to future works. The most important such improvement would be to use the algorithm in conjunction with a \emph{quasi-Monte Carlo} approach. The runtime to obtain the value of an integral up to accuracy $\delta$ currently scales as $\delta^{-2}$, as is standard for a Monte Carlo method. Changing to a quasi-Monte Carlo based procedure would improve this scaling to $\delta^{-1}$. 

Another improvement would be to find an entirely \emph{canonical} deformation prescription. Currently, our deformation still relies on an external parameter that has to be fine-tuned to the respective integral. A canonical deformation prescription that does not depend on a free parameter would lift the burden of this fine-tuning from the user and would likely also produce better rates of convergence.

A more technical update of \ft would involve an implementation of the tropical sampling algorithm on GPUs or on distributed cluster systems. The current implementation of \ft is parallelized and can make use of all cores of a single computer. Running \ft on multiple computers in parallel is not implemented, but there are no technical obstacles to write such an implementation, which we postpone to a future research project.

\section*{Acknowledgements}
We thank Nima Arkani-Hamed, Aaron Hillman, Sebastian Mizera and Erik Panzer for helpful exchanges on facet presentations of Newton polytopes of Symanzik polynomials, Pierpaolo Mastrolia for stimulating discussions on applications to phenomenology,
Yan-Qing Ma for helpful comments on the manuscript
and Vitaly Magerya for comments and independently verifying our numbers in Example \ref{sec:4L_0pt} using \soft{pySecDec}. FT thanks Georgios Papathanasiou for continued support. HJM and FT thank the Institute for Theoretical Studies at the ETH Zürich for hosting the workshop `Tropical and Convex Geometry and Feynman integrals' in August 2022, which was beneficial for the completion of this work. All authors thank the Institute for Advanced Studies, Princeton US, for hospitality during a stay in May 2023 where parts of this work were completed.  MB was supported by Dr.\ Max Rössler, the Walter Haefner Foundation and the ETH Zürich Foundation.  Some of our calculations were carried out on the ETH Euler cluster.

\providecommand{\href}[2]{#2}\begingroup\raggedright\endgroup
\end{document}